\documentclass[conference]{IEEEtran}
\IEEEoverridecommandlockouts

\usepackage{cite}
\usepackage{amsmath,amssymb,amsfonts,amsthm}
\usepackage{algorithmic}
\usepackage{graphicx}
\usepackage{subcaption}
\usepackage{textcomp}
\usepackage{xcolor}
\usepackage[bookmarks=false]{hyperref}

\usepackage[a4paper, left=1.8cm, right=1.8cm, bottom=2.3cm, top=2.3cm]{geometry}

\usepackage{tikz}
\usetikzlibrary{positioning}
\usetikzlibrary{automata}
\usetikzlibrary{arrows}
\tikzset{
    node distance=2.5cm,
    every edge/.append style={ 
        draw,
        ->,
        auto, 
        semithick
    },
    every state/.append style={
        minimum size=1.2cm
    }
}

\usepackage{flushend}

\DeclareMathOperator*{\argmax}{argmax}
\DeclareMathOperator*{\argmin}{argmin}

\newtheorem{thm}{Theorem}
\newtheorem{coro}{Corollary}
\newtheorem{defn}{Definition}

\newtheorem{lm}{Lemma}
\newtheorem{clm}{Claim}
\theoremstyle{remark}
\newtheorem{rem}{Remark}

\definecolor{bluishgreen}{RGB}{0,158,115}
\definecolor{vermillion}{RGB}{213,94,0}
\definecolor{myblue}{RGB}{0,114,178}
\definecolor{myorange}{RGB}{230,159,0}

\allowdisplaybreaks

\begin{document}
\begin{NoHyper}

\title{Memory Constrained Adversarial Hypothesis Testing
}

\author{\IEEEauthorblockN{Malhar A. Managoli and Vinod M. Prabhakaran}
\IEEEauthorblockA{School of Technology and Computer Science \\
Tata Institute of Fundamental Research\\
Mumbai, India}
}

\maketitle

\begin{abstract}
We study adversarial binary hypothesis testing under memory constraints. The test is a time-invariant randomized finite state machine (FSM) with $S$ states. Associated with each hypothesis is a set of distributions. Given the hypothesis, the distribution of each sample is chosen from the set associated with the hypothesis by an adversary who has access to past samples and the history of states of the FSM so far. We obtain upper and lower bounds on the minimax asymptotic probability of error as a function of $S$. The bounds have the same exponential behaviour in $S$ and match for a class of problems.
\end{abstract}

\section{Introduction}
In binary hypothesis testing, one of the most basic statistical inference tasks, given two distributions $p_0,p_1$, and independent and identically distributed (i.i.d.) samples drawn according to one of them, the goal is to determine which of the two distributions the samples came from.
Often, $p_0$ and $p_1$ are not exactly known. Composite hypothesis testing considers subsets of distributions $\mathcal{P}_0$ and $\mathcal{P}_1$, and samples are drawn i.i.d. according to a fixed (but unknown) distribution from one of the two sets. In classical robust statistics, these subsets $\mathcal{P}_0$ and $\mathcal{P}_1$ are ``neighbourhoods'' of the ideal model distributions $p_0$ and $p_1$, respectively,~\cite{huber1965robust,huber1973minimax,levy2009robust}. Fangwei and Shiyi~\cite{fangwei1996hypothesis,fu1998hypothesis} studied a variant in which the distribution may vary from sample to sample (but all distributions are from one of $\mathcal{P}_0$ and $\mathcal{P}_1$). More recently, Brand\~{a}o et al.~\cite{brandao2020adversarial} considered adversarial hypothesis testing, in which the distribution of each sample may be adversarially chosen (from one of the sets) depending on previous samples. Other forms of robustness have also been studied in the context of hypothesis testing~\cite{JinLai2021adversarial,puranik2022generalized,PanLiTan2023adversarialgame,managoli25error} and, more generally, statistical inference~\cite{diakonikolas2023algorithmic}.

Another practical consideration is that testing algorithms have finite memory. Cover~\cite{cover1969hypothesis} initiated the study of hypothesis testing with memory constraints by modelling the algorithm as a finite state machine (FSM). If the FSM is allowed to be time-variant, it is known that arbitrarily small errors can be achieved with only three states~\cite{cover1969hypothesis,koplowitz1975necessary}. For the binary hypothesis testing problem, Hellman and Cover~\cite{hellman1970learning} obtained the optimal trade-off between error probability and the number of states for time-invariant randomized FSMs. More recently, Berg, Ordentlich, and Shayevitz~\cite{berg2020binary} obtained upper and lower bounds on this trade-off for time-invariant deterministic FSMs. The problem of memory constrained algorithms for statistical inference has received significant recent attention; see~\cite{berg2024statistical} for an excellent survey.

In this work, we study adversarial hypothesis testing using memory constrained algorithms. Like Hellman and Cover~\cite{hellman1970learning}, we consider time-invariant randomized FSMs. The adversary chooses the distribution of each sample based on previous samples as in Brand\~{a}o et al.~\cite{brandao2020adversarial}. Furthermore, we allow the adversary's choice to also depend on the history of states of the FSM so far. The adversary's choices of distributions are confined to one of two sets, $\mathcal{P}_0$ or $\mathcal{P}_1$, depending on whether the hypothesis is $H=0$ or $H=1$, respectively. We obtain upper and lower bounds on the optimal minimax asymptotic error probability for a given number of states $S$. These bounds exhibit the same exponential behaviour in $S$ and match for a class of problems. 

We show our achievability using an FSM inspired by that of Hellman and Cover. Their FSM, with states $\{1,\ldots,S\}$, for testing $p$ versus $q$ distributed on $\mathcal{X}$, is organized as a {\em birth-and-death chain} that makes transitions to the right (resp. left) only on observing the letter $x\in\mathcal{X}$ that maximizes (resp. minimizes) the likelihood ratio $q(x)/p(x)$ (see Section~\ref{sec:hellman-cover} for a brief description). 
Arguably, this letter provides the best evidence for $H=1$ (resp. $H=0$). In the adversarial setting, no one letter may provide good evidence under all possible distributions. We assign weights to the letters and these weights determine the probability with which our FSM transitions to the right/left. While Hellman and Cover analyse the stationary distribution of their FSM under $p$ and $q$, to handle the adversarial setting, which may preclude ergodicity, we employ a martingale-based analysis. To show our converse, we demonstrate a saddle-point property for the optimal error exponent (of S) that we achieve and relate it to the Hellman-Cover exponent minimized over a pair of distributions $p\in\mathcal{P}_0$, $q\in\mathcal{P}_1$.

\section{Problem Statement and Preliminaries}

Consider a finite alphabet $\mathcal{X}$ and two closed convex\footnote{As will become clear, assuming these sets are convex amounts to allowing the adversary to randomize.} sets of distributions $\mathcal{P}_0$ and $\mathcal{P}_1$ on $\mathcal{X}$ 
corresponding to hypotheses $H=0$ and $H=1$, respectively. 
The memory-constrained detector is a {randomized}, {time-invariant}, {discrete-time}, {finite-state machine} (FSM) on $S$ states with state space $[S]=\{1,2,\ldots,S\}$, input space $\mathcal{X}$, and output space $\{0,1\}$, specified by a {\em state-transition function} $\pi(t|s,x), t,s\in[S],x\in\mathcal{X}$, a {\em decision function} $\hat{H}:[S]\rightarrow\{0,1\}$ and an initial state $R_0\in[S]$.
The state-transition function is non-negative and $\sum_{t} \pi(t|s,x)=1, s\in[S], x\in\mathcal{X}$.
Let $R_n,n\in\mathbb{N}$ denote the detector's state at the end of time step $n$.
During time step $n$, the detector receives a sample $X_n\in\mathcal{X}$ and samples its new state $R_{n}$ according to the distribution $\pi(\cdot|R_{n-1},X_n)$.
The detector's output at time $n$ is $\hat{H}(R_n)$.
Under hypothesis $H=h$, $h\in\{0,1\}$, the sample $X_n$ is drawn from a distribution $p_n$ chosen adversarially from $\mathcal{P}_h$.
When choosing $p_n$, the adversary, who knows the hypothesis, has access\footnote{While our achievability analysis handles such an adversary, to show our converse, we only need to assume a weaker adversary who has no access to either the past samples or the history of the FSM's states.} to past samples $X_1^{n-1}$ and the past sequence of detector states $R_0^{n-1}$.
Thus, an adversarial strategy $\mathcal{A}_h$, under hypothesis $H=h$, consists of a sequence of functions\footnote{Without loss of generality, these are deterministic functions since $\mathcal{P}_0$ and $\mathcal{P}_1$  are convex.} $\{p_n:{\mathcal{X}}^{n-1}\times{[S]}^{n}\rightarrow\mathcal{P}_h\}_{n\in\mathbb{N}\backslash\{0\}}$.

The performance of the detector $(\pi,\hat{H})$, against a pair of adversarial strategies $\mathcal{A}=(\mathcal{A}_0,\mathcal{A}_1)$, is measured by the asymptotic error probability:
\begin{align*}
    P_e(\pi,\hat{H};\mathcal{A}):=\max_{h\in\{0,1\}} \lim_{n\rightarrow\infty}\frac{1}{n} \sum_{i\in[n]} \mathbb{P}_{H=h}^{\pi,\mathcal{A}_h}[\hat{H}(R_i)\ne h],
\end{align*}
where $\mathbb{P}_{H=h}^{\pi,\mathcal{A}_h}$ denotes the probability under hypothesis $H=h$ and we suppressed the dependence of $P_e$ on the initial state $R_0$.
We are interested in studying the asymptotic minimax risk:
\begin{align}
    P_e^*(S)&:=\inf_{\pi,\hat{H}} \; \sup_\mathcal{A}P_e(\pi,\hat{H};\mathcal{A}) \label{eq:minimax_risk_def}\\
\intertext{and its associated error exponent}
    \Gamma(\mathcal{P}_0,\mathcal{P}_1)&:=\lim_{S\rightarrow\infty}-\frac{1}{S} \log P_e^*(S). \label{eq:error_exp_def}
\end{align}

\subsection{Hellman-Cover~\cite{hellman1970learning}}\label{sec:hellman-cover}
Hellman and Cover \cite{hellman1970learning} studied the simple-versus-simple version\footnote{Hellman and Cover considered the asymptotic instantaneous error probability $\limsup_{n\rightarrow\infty}\mathbb{P}_{H=h}^{\pi}[\hat{H}(R_n)\ne h]$ rather than the asymptotic average error probability as we do here.
However, they use an aperiodic FSM and there is no adversary, so, due to ergodicity, both definitions coincide for them. Also, they study the Bayesian error probability rather than the worst-case one.}, i.e., $\mathcal{P}_0=\{p\},\mathcal{P}_1=\{q\}$ are singletons. As there is only one adversarial strategy, it is an inherently non-adversarial setting. They obtained the minimax risk as
\begin{align}
    P_e^*(S)&=\frac{1}{1+\sqrt{\gamma_\mathrm{HC}(p,q)^{S-1}}}, \label{eq:hellman_result}\\
\intertext{where}
    \gamma_\mathrm{HC}(p,q)&:=\max_{x_+\in\mathcal{X}}\frac{q(x_+)}{p(x_+)}\max_{x_-\in\mathcal{X}}\frac{p(x_-)}{q(x_-)}. \label{eq:gamma_HC_def}
\end{align}
For upper bounding $P_e^*(S)$, they use an FSM which they call a {\em saturable counter}:
fix $x_+\in\argmax_{x\in\mathcal{X}}\frac{q(x)}{p(x)}$ and $x_-\in \argmax_{x\in\mathcal{X}}\frac{p(x)}{q(x)}$.
From a state $s\in[S]\backslash\{1,S\}$, the FSM deterministically transitions to $s+1$ on observing $x_+$, which is treated as evidence favouring $H=1$; it transitions to $s-1$ on observing $x_-$, evidence favouring $H=0$; otherwise, it stays on at state $s$.
The behaviour is different at the ends:
from $s=1$ (resp. $s=S$), upon observing $x_-$ (resp. $x_+$), it stays put, as it cannot move further left (resp. right).
Additionally, on observing $x_+$ (resp. $x_-$), it transitions to state 1 (resp. $S-1$) with probability $\delta$ (resp. $\kappa\delta$), and with the remaining probability $1-\delta$ (resp. $1-\kappa\delta$) it stays put, where $\delta>0$ is a small positive probability and $\kappa>0$. In state 1, the FSM outputs $\hat{H}(1)=0$ and, in state $S$, $\hat{H}(S)=1$; outputs in the other states are fixed arbitrarily. 
The stationary distributions $\mu_0$ and $\mu_1$ of this FSM under $H=0$ and $H=1$, respectively, are such that
\begin{align}
    \frac{\mu_0(1)}{\mu_0(S)}= \kappa\left(\frac{p(x_-)}{p(x_+)}\right)^{S-1}&& 
    \frac{\mu_1(S)}{\mu_1(1)}= \frac{1}{\kappa}\left(\frac{q(x_+)}{q(x_-)}\right)^{S-1}\label{eq:hellman_achievability_1}
\end{align}
Reducing the probability of transitions out of the end states $\{1,S\}$ by choosing a $\delta\ll 1$, ensures that $\mu_h(0)+\mu_h(S)\approx 1$, $h=0,1$. Then, the probabilities of the two types of errors are $\mu_0(S)$ and $\mu_1(1)$, which can now be obtained from \eqref{eq:hellman_achievability_1}. Choosing $\kappa=\sqrt{p(x_-)q(x_+)/(q(x_-)p(x_+))}$ to make them equal leads to the achievability of \eqref{eq:hellman_result}. The converse is shown by upper bounding $\max_{s,s'\in[S]}\frac{\mu_1(s)\mu_0(s')}{\mu_0(s)\mu_1(s')}$ of an arbitrary FSM by $\gamma^{S-1}$.


\subsection{Classical Robust Statistics/Composite Case} \label{sec:composite}
In composite hypothesis testing, samples are generated i.i.d. according to $p$ or $q$, depending on the hypothesis, but $p$ and $q$ are unknown, except that $p\in\mathcal{P}_0$ and $q\in\mathcal{P}_1$ (where $\mathcal{P}_0$ and $\mathcal{P}_1$ need not be closed or convex). Classical robust statistics considers $\mathcal{P}_0$ and $\mathcal{P}_1$ which are ``neighbourhoods'' of a pair of known ideal model distributions $p_0$ and $p_1$, respectively~\cite{huber1965robust}. An important concept in this context is that of the least favourable distribution (LFD) pair.
%
\begin{defn}
    \label{def:LFD_pair}
    For $\mathcal{P}_0, \mathcal{P}_1\subseteq\Delta(\mathcal{X})$, $\left(p^*\in\mathcal{P}_0,q^*\in\mathcal{P}_1\right)$ is said to be an LFD pair, if for every $\eta>0$,
    \begin{align}
       \mathbb{P}_{X\sim p}\left[\frac{q^*(X)}{p^*(X)}>\eta\right] &\le \mathbb{P}_{X\sim p^*} \left[\frac{q^*(X)}{p^*(X)}>\eta\right],\; \forall p\in\mathcal{P}_0\label{eq:LFD_pair_def_0}\\
       \mathbb{P}_{X\sim q}\left[\frac{q^*(X)}{p^*(X)}>\eta\right] &\ge \mathbb{P}_{X\sim q^*}\left[\frac{q^*(X)}{p^*(X)}>\eta\right],\;\forall q\in\mathcal{P}_1\label{eq:LFD_pair_def_1}
    \end{align}
\end{defn}
It is known that an LFD pair exists when $\mathcal{P}_0$ and $\mathcal{P}_1$ are neighbourhoods of fixed distributions with respect to additive corruption, total-variation distance, Prohorov distance, etc.~\cite{huber1973minimax}. 
Classically, when an LFD pair $(p^*,q^*)$ exists, the likelihood ratio test for $p^*$ vs $q^*$ performs at least as well as for testing $p$ versus $q$ for any pair $p\in\mathcal{P}_0, q\in\mathcal{P}_1$. 
It turns out, this continues to hold for memory constrained composite hypothesis testing.
When an LFD pair $(p^*,q^*)$ exists, Hellman and Cover's solution designed for testing $p^*$ versus $q^*$ performs at least as well for testing $p$ versus $q$ for any pair $p\in\mathcal{P}_0, q\in\mathcal{P}_1$.
In fact, we observe that this remains true for a weaker condition than the existence of an LFD pair.
\begin{defn}
    \label{def:weak_LFD_pair}
    For $\mathcal{P}_0, \mathcal{P}_1\subseteq\Delta(\mathcal{X})$, we call $\left(p^*\in\mathcal{P}_0,q^*\in\mathcal{P}_1\right)$ a {\em weak LFD pair} if the sets $\mathcal{X}_+:= \argmax_{x\in\mathcal{X}} \frac{q^*(x)}{p^*(x)}$, and $\mathcal{X}_-:= \argmin_{x\in\mathcal{X}}\frac{q^*(x)}{p^*(x)}$ satisfy
    \begin{align}
        p(\mathcal{X}_+)\le p^*(\mathcal{X}_+), &&p(\mathcal{X}_-)\ge p^*(\mathcal{X}_-),\; \forall p\in\mathcal{P}_0, \label{eq:weak_LFD_def_0}\\
        q(\mathcal{X}_+)\ge q^*(\mathcal{X}_+), &&q(\mathcal{X}_-)\le q^*(\mathcal{X}_-),\; \forall q\in\mathcal{P}_1. \label{eq:weak_LFD_def_1}
    \end{align}
\end{defn}
\begin{rem}
    Note that an LFD pair is also a weak LFD pair, as can be seen by applying \eqref{eq:LFD_pair_def_0} and \eqref{eq:LFD_pair_def_1} to $\eta=\max_{x\in\mathcal{X}}\frac{q^*(x)}{p^*(x)}-\varepsilon$ (for small enough $\varepsilon>0$), and  $\eta=\min_{x\in\mathcal{X}}\frac{q^*(x)}{p^*(x)}$. The converse is not true (see
    Appendix \ref{sec:example_weak_vs_strong_extra}).
\end{rem}
For the {\em saturable counter} FSM, slightly modified to transition when any member of the sets $\mathcal{X}_+= \argmax_{x\in\mathcal{X}}\frac{q(x)}{p(x)}$ and $\mathcal{X}_-= \argmin_{x\in\mathcal{X}}\frac{q(x)}{p(x)}$ are observed, instead of only when a fixed member each ($x_+$ and $x_-$, respectively) from these sets are seen, using  \eqref{eq:weak_LFD_def_0}-\eqref{eq:weak_LFD_def_1}, one can show that the asymptotic probability of error of distinguishing a pair $p\in\mathcal{P}_0$ vs $q\in\mathcal{P}_1$, is at most that of distinguishing $p^*$ vs $q^*$.

As we will show (see Theorem \ref{thm:weak_LFD}), this is true even in the adversarial setting, i.e., when a weak LFD pair $(p^*,q^*)$ exists, the (slightly modified) FSM from above performs just as well against any adversarial strategy $\mathcal{A}$ as it does for testing $p^*$ versus $q^*$.

\section{Main Results}
\label{sec:results}
We define the following generalisations of $\gamma_\mathrm{HC}$:
\begin{align}
    \gamma&:= \sup_{\substack{f_+,f_-}} \; \min_{\substack{p\in\mathcal{P}_0}, \, q\in\mathcal{P}_1} \frac{\mathbb{E}_q[f_+(X)]\mathbb{E}_p[f_-(X)]}{\mathbb{E}_q[f_-(X)]\mathbb{E}_p[f_+(X)]}, \label{eq:gamma_def}\\
    C&:= \sup_{f_+,f_-} \; \min_{\substack{p,p'\in\mathcal{P}_0,\;  q,q'\in\mathcal{P}_1}} \frac{\mathbb{E}_q[f_+(X)]\mathbb{E}_p[f_-(X)]}{\mathbb{E}_{q'}[f_-(X)]\mathbb{E}_{p'}[f_+(X)]}, \label{eq:C_def}
\end{align}
where the suprema are over non-negative functions $f_+$ and $f_-$ defined on $\mathcal{X}$ s.t. $f_+(x)+f_-(x)\leq 1$, $x\in\mathcal{X}$, and we follow the convention that $\frac{0}{0}=0$.
\begin{rem}
    \label{rem:gamma_relation_to_pvp}
    Firstly, it is easy to see\footnote{Since
${\mathbb{E}_q[f_+]}/{\mathbb{E}_{p}[f_+]}
        = ({\sum_{x\in\mathcal{X}}q(x)f_+(x)})/({\sum_{x\in\mathcal{X}}p(x)f_+(x)})
        \le \max_{x\in\mathrm{supp}(f_+)}\frac{q(x)f_+(x)}{p(x)f_+(x)}$, we have 
$\sup_{f_+} \frac{\mathbb{E}_q[f_+]}{\mathbb{E}_{p}[f_+]}\le \max_{x\in\mathcal{X}}\frac{q(x)}{p(x)}$. 
Similarly,
    $\sup_{f_-} \frac{\mathbb{E}_p[f_-]}{\mathbb{E}_q[f_-]}= \max_{x\in\mathcal{X}}{p(x)}/{q(x)}$.}
that when $\mathcal{P}_0=\{p\}, \mathcal{P}_1=\{q\}$ are singletons, the supremum over $f_+$ and $f_-$ is $\gamma=\gamma_\mathrm{HC}(p,q)$ (and, clearly, $C=\gamma$).
    Now notice that swapping the min and sup in \eqref{eq:gamma_def} gives an upper bound on $\gamma$.
    \begin{align*}
        \gamma\le \min_{\substack{p\in\mathcal{P}_0}, \, q\in\mathcal{P}_1}  \sup_{\substack{f_+,f_-}} \frac{\mathbb{E}_q[f_+(X)]\mathbb{E}_p[f_-(X)]}{\mathbb{E}_q[f_-(X)]\mathbb{E}_p[f_+(X)]}.
    \end{align*}
    Since, for a fixed $p,q$, the inner supremum is just $\gamma_{\mathrm{HC}}(p,q)$,
    \begin{align}
    \gamma \leq \min_{\substack{p\in\mathcal{P}_0}, \, q\in\mathcal{P}_1} \gamma_{\mathrm{HC}}(p,q). \label{eq:gamma-le-gammaHC}
    \end{align}
    A key part of our converse proof involves showing that this inequality is tight in general, i.e., sup and min in \eqref{eq:gamma_def} can be interchanged (see Lemma \ref{lm:minimax}).
\end{rem}

The functions $f_+$ and $f_-$ in the above definitions can be interpreted as follows: recall that, from state $s$, the FSM of Hellman and Cover transitions to $s+1$ on seeing $x_+$ and to $s-1$ on seeing $x_-$ (with modifications at the two end states as described earlier). The FSM we use to show our achievability result transitions from state $s$ to $s+1$ with {\em probability} $f_+(x)$ on seeing $x$ and to state $s-1$ with probability $f_-(x)$ (again with some modifications at the end states). When $f_+$ and $f_-$ are indicator functions of $x_+$ and $x_-$, respectively (or of sets $\mathcal{X}_+$ and $\mathcal{X}_-$, respectively), our FSM reduces to Hellman and Cover's (or the slightly modified one discussed in Section~\ref{sec:composite}). In general, restricting $f_+,f_-$ to indicator functions in \eqref{eq:gamma_def} may be strictly sub-optimal (see Section~\ref{sec:example_nonindicator}). 
Our main result is the following characterisation of $P_e^*(S)$.
\begin{thm}
    \label{thm:error_characterisation}
    \begin{align}
        P_e^*(S)&\ge\frac{1}{1+\sqrt{\gamma^{S-1}}}. \label{eq:lower_bound}\\
   \intertext{For $S\ge3$,}
        P_e^*(S)&\le\frac{1}{1+\sqrt{C\gamma^{S-2}}}. \label{eq:upper_bound}
    \end{align}
\end{thm}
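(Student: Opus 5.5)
The proof splits into a converse part for \eqref{eq:lower_bound} and an achievability part for \eqref{eq:upper_bound}.

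\emph{Converse.} We restrict the adversary to i.i.d.\ strategies: under $H=0$ it always plays a fixed $p\in\mathcal{P}_0$, and under $H=1$ a fixed $q\in\mathcal{P}_1$. This is allowed because, as the footnote notes, the converse only needs a weak adversary. Against such an adversary any FSM $(\pi,\hat H)$ is simply a detector for the simple-versus-simple problem $p$ versus $q$. The Hellman--Cover converse \eqref{eq:hellman_result} then gives, for every FSM,
\[
\sup_{\mathcal{A}} P_e \ge \frac{1}{1+\sqrt{\gamma_{\mathrm{HC}}(p,q)^{S-1}}}.
\]
Two adaptations are needed here. First, Hellman--Cover use the instantaneous Bayesian error, whereas we use the Ces\`aro-averaged worst-case error. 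Their argument bounds the likelihood ratios of stationary distributions, and the Ces\`aro averages of the state distributions converge to a stationary distribution of the chain under $p$ (resp.\ $q$). The ratio bound $\mu_1(s)\mu_0(s')/(\mu_0(s)\mu_1(s'))\le\gamma_{\mathrm{HC}}^{S-1}$ applies to these limits, and the max of the two errors dominates the minimax value. Second, the chain may be reducible. In that case we apply the ratio bound on the recurrent class reached, which is what the original argument actually needs. Taking the best pair $(p,q)$ gives the bound with $\min_{p,q}\gamma_{\mathrm{HC}}(p,q)$. The key step is then Lemma~\ref{lm:minimax}: the inequality \eqref{eq:gamma-le-gammaHC} holds with equality. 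I would prove it by a minimax theorem (Sion). Taking logarithms, $\log\mathbb{E}_q[f_+]-\log\mathbb{E}_p[f_+]+\log\mathbb{E}_p[f_-]-\log\mathbb{E}_q[f_-]$ is concave in $(p,q)$ for the terms $\log\mathbb{E}_q[f_+]$ and $\log\mathbb{E}_p[f_-]$, but convex for the negative terms. So I would reparametrize: the ratio is quasi-convex in $(p,q)$ on the compact convex set $\mathcal{P}_0\times\mathcal{P}_1$. On the $f$ side, the ratio depends only on the direction of $f_+$ and of $f_-$, and the constraint $f_++f_-\le1$ can be dropped up to scaling. One then checks quasi-concavity in $(f_+,f_-)$, or instead uses a direct separating-hyperplane argument at the minimizing pair $(p^*,q^*)$, guessing $f_\pm$ from the first-order optimality conditions. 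This minimax interchange is the main obstacle.

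\emph{Achievability.} Fix $(f_+,f_-)$ nearly attaining the suprema. We build a birth--death FSM on $[S]$. From an interior state $s$, on seeing $x$ it moves to $s+1$ with probability $f_+(x)$, to $s-1$ with probability $f_-(x)$, and otherwise stays. At the ends it leaves with small probabilities $\delta f_+(x)$ from state $1$ and $\kappa\delta f_-(x)$ from state $S$, jumping to state $2$ and state $S-1$ respectively. Output $\hat H(s)=0$ for $s\le s_0$ and $1$ otherwise; the end states dominate anyway. Under $H=0$, consider the process $M_n=\lambda^{R_n}$ with $\lambda$ chosen so that, for every $p\in\mathcal{P}_0$, $\mathbb{E}_p[f_+]\lambda+\mathbb{E}_p[f_-]\lambda^{-1}\le \mathbb{E}_p[f_+]+\mathbb{E}_p[f_-]$. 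Any $\lambda\le\min_p \mathbb{E}_p[f_-]/\mathbb{E}_p[f_+]$ works, and it makes $M_n$ a supermartingale in the interior whatever the adversary does. An optional-stopping or drift argument over excursions then bounds the long-run fraction of time spent at $S$ relative to $1$. Each excursion from $1$ reaches $S$ with probability at most about $\lambda^{-(S-2)}$ times a boundary factor. The boundary factor involves the ratio of exit rates and the first and last steps, and it produces the factor $C$ with mismatched $p,p'$ (and $q,q'$), because the adversary may use different distributions at different times; this also explains the exponent $S-2$. Symmetrically under $H=1$ with $q$. As $\delta\to0$ the time spent in interior states vanishes, and the error is governed by the ratios of occupation of states $1$ and $S$. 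Balancing the two errors with $\kappa$, as in Section~\ref{sec:hellman-cover}, gives $1/(1+\sqrt{C\gamma^{S-2}})$. The delicate part here is making the renewal/excursion accounting rigorous without stationarity, and extracting exactly the constant $C$.
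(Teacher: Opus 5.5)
Your architecture matches the paper's. The converse runs Hellman--Cover against a fixed pair $(p,q)$ and then uses the minimax identity of Lemma~\ref{lm:minimax}; achievability uses a weighted birth--death FSM with $\delta$-small exits at the ends. However, each half has a genuine gap.

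\textbf{Converse.} A single application of Sion's theorem needs $\gamma(f_+,f_-;p,q)$ to be quasi-convex jointly in $(p,q)$ for fixed $(f_+,f_-)$, and quasi-concave jointly in $(f_+,f_-)$ for fixed $(p,q)$. Both properties fail in general. The ratio is a product of linear-fractional functions of separate variables, so it is quasi-linear in each variable alone but not jointly. For example, on $\{a,b,c\}$ take $f_+=\mathbf{1}_{\{a\}}$ and $f_-=\mathbf{1}_{\{b\}}$. The pairs $(p,q)=\bigl((\tfrac12,\tfrac12,0),(\tfrac1{10},\tfrac12,\tfrac25)\bigr)$ and $(p',q')=\bigl((\tfrac12,\tfrac1{10},\tfrac25),(\tfrac12,\tfrac12,0)\bigr)$ both give $\gamma=\tfrac15$, while their midpoint gives $\tfrac{9}{25}$.

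Your fallback (a separating hyperplane at the minimizing pair, with $f_\pm$ read off from first-order conditions) is not carried out. Making it rigorous at a nonsmooth minimizer of $\gamma_{\mathrm{HC}}$ amounts to proving the minimax identity itself.

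The missing idea is to split all four variables:
\begin{itemize}
\item Normalize $f_\pm$ to the simplex, keeping only $f_+$ with $\min_q\mathbb{E}_q[f_+]>0$ and $f_-$ with $\min_p\mathbb{E}_p[f_-]>0$ so that upper semicontinuity holds.
\item Write the value as $\min_p\min_q\sup_{f_+}\sup_{f_-}\gamma$.
\item Apply Sion repeatedly, exchanging one min-variable with one sup-variable at a time.
\end{itemize}
Each exchange requires the partially optimized function (e.g.\ $\min_q\gamma$, or $\sup_{f_-}\min_q\gamma$) to be quasi-convex in the next min-variable and quasi-concave in the next sup-variable. This is checked using the two factorizations $\gamma=\frac{\mathbb{E}_q[f_+]}{\mathbb{E}_q[f_-]}\cdot\frac{\mathbb{E}_p[f_-]}{\mathbb{E}_p[f_+]}=\frac{\mathbb{E}_q[f_+]}{\mathbb{E}_p[f_+]}\cdot\frac{\mathbb{E}_p[f_-]}{\mathbb{E}_q[f_-]}$: a nonnegative factor independent of the variable being mixed passes through the inner min or sup. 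An inner use of Sion to swap $\sup_{f_-}$ and $\min_q$ is also needed.

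\textbf{Achievability.} You use the same $(f_+,f_-)$ for the exits from states $1$ and $S$ as for the interior moves. Even with the accounting made fully rigorous, this construction only yields $\inf_{f_+,f_-}\bigl(1+\sqrt{C(f_+,f_-)\gamma(f_+,f_-)^{S-2}}\bigr)^{-1}$, with $C(f_+,f_-)$ and $\gamma(f_+,f_-)$ the objectives in \eqref{eq:C_def} and \eqref{eq:gamma_def}. Since $C$ and $\gamma$ are \emph{separate} suprema, this can be strictly worse than \eqref{eq:upper_bound}. That is harmless for the exponent but not for the stated bound.

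The example of Section~\ref{sec:example_nonindicator} shows this. The (unique up to scaling) $\gamma$-maximizer $f_+=(\tfrac13,1,0)$, $f_-=(0,0,1)$ has $C(f_+,f_-)=\tfrac83$, whereas $f_+=(\tfrac12,1,0)$, $f_-=(0,0,1)$ has $C(f_+,f_-)=3$. The fix is the paper's: use an interior pair $(f^\gamma_+,f^\gamma_-)$ and an independent boundary pair $(f^C_+,f^C_-)$. Then $\kappa C_0$ depends only on the latter and $\gamma_0^{S-2}$ only on the former.

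Two smaller points:
\begin{itemize}
\item The claim that ``any $\lambda\le\min_p\mathbb{E}_p[f_-]/\mathbb{E}_p[f_+]$'' makes $\lambda^{R_n}$ a supermartingale fails when $\lambda<1$. The drift has the sign of $(\lambda-1)(\mathbb{E}_p[f_+]-\mathbb{E}_p[f_-]/\lambda)$. The gambler's-ruin bound still survives with $\lambda=\gamma_0$ after normalizing to $(\lambda^{R_n}-\lambda)/(\lambda^S-\lambda)$.
\item The excursion accounting you flag as delicate under an adaptive adversary is exactly what the paper bypasses. It uses the additive potential $M_n=\sum_{i\le n}(\mathbf{1}_{R_i=1}-\kappa C_0\gamma_0^{S-2}\mathbf{1}_{R_i=S})-V(R_n)$, a submartingale against every adversary that directly compares the Ces\`aro occupations of $1$ and $S$. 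Analogous supermartingales, after forcing $f^\gamma_\pm\ge\eta$, show that interior occupation vanishes as $\delta\to0$.
\end{itemize}
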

This leads to an exact characterisation of the error exponent $\Gamma$ since $C>0$ if $\gamma>0$ (see
Appendix \ref{sec:results_extra}
for a proof).
\begin{coro}
    \label{coro:exponent_characterisation}
    \begin{align}
        \Gamma=\frac{1}{2}\log\gamma \label{eq:exponent_characterisation}
    \end{align}
\end{coro}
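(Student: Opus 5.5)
The plan is to squeeze $-\frac{1}{S}\log P_e^*(S)$ between the two bounds of Theorem~\ref{thm:error_characterisation}, after first checking that $\gamma\ge 1$ and $C\ge 1$ so that both bounds have the expected exponential form.

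\emph{Step 1 ($\gamma\ge1$ and $C\ge 1$).} Take the admissible pair $f_+\equiv f_-\equiv \tfrac12$, which satisfies $f_++f_-\le 1$. For every $p,p'\in\mathcal{P}_0$ and $q,q'\in\mathcal{P}_1$, all four expectations in \eqref{eq:gamma_def} and \eqref{eq:C_def} equal $\tfrac12$. Hence the ratio equals $1$ for every choice of distributions, and the minima are $1$. Taking suprema gives $\gamma\ge 1$ and $C\ge 1$. In particular $C>0$ always holds, which is more than the remark preceding the corollary requires. If $\gamma=\infty$, I interpret both sides of \eqref{eq:exponent_characterisation} as $+\infty$; this case is treated at the end of Step 3. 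Until then, assume $1\le\gamma<\infty$.

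\emph{Step 2 (upper bound on the exponent).} By \eqref{eq:lower_bound} and $\gamma\ge 1$,
\begin{align*}
P_e^*(S)\ \ge\ \frac{1}{1+\gamma^{(S-1)/2}}\ \ge\ \frac{1}{2\gamma^{(S-1)/2}}.
\end{align*}
Therefore
\begin{align*}
-\frac1S\log P_e^*(S)\ \le\ \frac{S-1}{2S}\log\gamma+\frac{\log 2}{S},
\end{align*}
and so $\limsup_{S\to\infty}-\frac1S\log P_e^*(S)\le\frac12\log\gamma$.

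\emph{Step 3 (lower bound on the exponent).} For $S\ge 3$, \eqref{eq:upper_bound} gives
\begin{align*}
P_e^*(S)\ \le\ \frac{1}{1+\sqrt{C\gamma^{S-2}}}\ \le\ C^{-1/2}\gamma^{-(S-2)/2}.
\end{align*}
This is finite because $C\ge1$ by Step 1. Hence
\begin{align*}
-\frac1S\log P_e^*(S)\ \ge\ \frac{S-2}{2S}\log\gamma+\frac{\log C}{2S},
\end{align*}
and so $\liminf_{S\to\infty}-\frac1S\log P_e^*(S)\ge\frac12\log\gamma$. Combining this with Step 2 shows that the limit in \eqref{eq:error_exp_def} exists and equals $\frac12\log\gamma$.

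When $\gamma=1$ both estimates give a limit of $0$, so no separate argument is needed. When $\gamma=\infty$, the same estimate with $\gamma$ replaced by an arbitrarily large finite value (the supremum is approached by admissible $f_\pm$) forces the exponent to be $+\infty$. This holds provided the achievability in \eqref{eq:upper_bound} applies to any admissible pair $f_\pm$ and not only to an optimiser; I would check that this is how the proof of Theorem~\ref{thm:error_characterisation} is set up.

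The only step needing care is Step 1's observation that $C$ is bounded away from $0$. Without it, the $C^{-1/2}$ prefactor could be infinite, and the upper bound \eqref{eq:upper_bound} would give no exponential decay. The constant pair $f_\pm=\tfrac12$ settles this immediately; everything else is routine limit manipulation.
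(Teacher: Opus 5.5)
Your proof is correct and follows the paper's route: you sandwich $-\frac{1}{S}\log P_e^*(S)$ between the two bounds of Theorem~\ref{thm:error_characterisation}. The main difference is how positivity of $C$ is handled. The paper treats $C=0$ as a separate case and argues it forces $\gamma=0$, whereas your constant pair $f_+\equiv f_-\equiv\tfrac12$ shows $C\ge 1$ always, so that case never arises. You also make explicit two points the paper leaves implicit: that the limit defining $\Gamma$ exists, and how the $\gamma=\infty$ edge case goes.
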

When a weak LFD pair exists, Theorem~\ref{thm:error_characterisation} gives an exact characterisation of $P_e^*(S)$ (see
Appendix \ref{sec:results_extra}
for a proof).
\begin{thm}
    \label{thm:weak_LFD}
    When a weak LFD pair $(p^*,q^*)$ exists for $\mathcal{P}_0,\mathcal{P}_1$,
    \begin{align}
        \label{eq:weak_LFD_achivability}
        P_e^*(S)={}&\frac{1}{1+\sqrt{\gamma_\mathrm{HC}(p^*,q^*)^{S-1}}}.
    \end{align}
\end{thm}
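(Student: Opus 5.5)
The plan is to sandwich $P_e^*(S)$ between the two bounds of Theorem~\ref{thm:error_characterisation}: the lower bound \eqref{eq:lower_bound} with $\gamma$ computed exactly, and an achievability matching it. For the lower bound, I will show that $\gamma=\gamma_\mathrm{HC}(p^*,q^*)$, so that \eqref{eq:lower_bound} reads $P_e^*(S)\ge 1/(1+\sqrt{\gamma_\mathrm{HC}(p^*,q^*)^{S-1}})$. The bound $\gamma\le\gamma_\mathrm{HC}(p^*,q^*)$ follows directly from \eqref{eq:gamma-le-gammaHC} by evaluating the minimum at the particular pair $(p^*,q^*)$.

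For the reverse inequality, take $f_+=\mathbf{1}_{\mathcal{X}_+}$ and $f_-=\mathbf{1}_{\mathcal{X}_-}$; these satisfy $f_++f_-\le1$ provided $\mathcal{X}_+\cap\mathcal{X}_-=\emptyset$, i.e.\ $p^*\neq q^*$ on the support. In the degenerate case $\gamma_\mathrm{HC}(p^*,q^*)=1$ both sides of \eqref{eq:weak_LFD_achivability} equal $1/2$ trivially. By \eqref{eq:weak_LFD_def_0}--\eqref{eq:weak_LFD_def_1}, for every $p\in\mathcal{P}_0$ and $q\in\mathcal{P}_1$,
\begin{align*}
\frac{q(\mathcal{X}_+)p(\mathcal{X}_-)}{q(\mathcal{X}_-)p(\mathcal{X}_+)}\ge\frac{q^*(\mathcal{X}_+)p^*(\mathcal{X}_-)}{q^*(\mathcal{X}_-)p^*(\mathcal{X}_+)}.
\end{align*}
The right-hand side equals $\gamma_\mathrm{HC}(p^*,q^*)$, because on $\mathcal{X}_+$ the ratio $q^*/p^*$ is constant and equal to its maximum, and on $\mathcal{X}_-$ it is constant and equal to its minimum. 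Hence $\gamma\ge\gamma_\mathrm{HC}(p^*,q^*)$. Some care with zero denominators is needed: if $p^*(\mathcal{X}_+)=0$, the value is $\infty$ and the errors vanish.

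The upper bound \eqref{eq:upper_bound} only gives $C\gamma^{S-2}$, which is not tight, so for achievability I will directly analyse the modified saturable counter, with end transitions of probability $\delta$ and $\kappa\delta$, against an arbitrary adversary. The key observation is that the walk's transition probabilities at state $s$ depend on the adversary only through $a_n=p_n(\mathcal{X}_+)$ and $b_n=p_n(\mathcal{X}_-)$. Under $H=0$ we have $a_n\le a^*:=p^*(\mathcal{X}_+)$ and $b_n\ge b^*:=p^*(\mathcal{X}_-)$, so each step is stochastically dominated, in the leftward direction, by the $p^*$ chain. I will make this precise with a monotone coupling of the adversarial chain and the i.i.d.\ $p^*$ chain, built step by step from common uniforms. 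The coupling must keep $R_n\le R_n^*$ for all $n$, which requires checking that at the end states the $\delta$-jumps preserve order. If it does not, I would instead use the martingale argument presumably underlying \eqref{eq:upper_bound}: define $M_n=\lambda^{R_n}$ with $\lambda=b^*/a^*$ and show it is a supermartingale away from the ends.

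With the coupling in hand, the time-average fraction of time the adversarial chain spends in $S$ is at most that of the $p^*$ chain, which is $\mu_0^*(S)$ by ergodicity. The symmetric argument handles $H=1$. Taking $\delta\to0$ and the Hellman--Cover choice of $\kappa$ then gives the upper bound.

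I expect the main obstacle to be the monotone coupling near the boundary states. The jump from $1$ to $1$ or from $S$ to $S-1$ is not order-preserving when the two chains sit at different states; for example, a jump from $S$ to $S-1$ could cross a chain sitting at $S-1$ that moves up. If the coupling fails there, the fallback is to bound the expected numbers of excursions and their lengths via optional stopping, which recovers the stationary ratios \eqref{eq:hellman_achievability_1} as inequalities.
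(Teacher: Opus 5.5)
Your lower-bound half is the paper's argument: \eqref{eq:gamma-le-gammaHC} evaluated at $(p^*,q^*)$, together with $f_\pm=\mathbf{1}_{\mathcal{X}_\pm}$, gives $\gamma=\gamma_\mathrm{HC}(p^*,q^*)$.

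For achievability, your reason for setting Theorem~\ref{thm:error_characterisation} aside is mistaken: under a weak LFD pair, \eqref{eq:upper_bound} \emph{is} tight. The inequalities \eqref{eq:weak_LFD_def_0}--\eqref{eq:weak_LFD_def_1} hold for each distribution separately. So with the same indicators, the four-fold minimum defining $C$ decouples:
\[
\min_{p,p'\in\mathcal{P}_0,\;q,q'\in\mathcal{P}_1}\frac{q(\mathcal{X}_+)\,p(\mathcal{X}_-)}{q'(\mathcal{X}_-)\,p'(\mathcal{X}_+)}
=\frac{\min_{q}q(\mathcal{X}_+)\,\min_{p}p(\mathcal{X}_-)}{\max_{q'}q'(\mathcal{X}_-)\,\max_{p'}p'(\mathcal{X}_+)}
=\frac{q^*(\mathcal{X}_+)\,p^*(\mathcal{X}_-)}{q^*(\mathcal{X}_-)\,p^*(\mathcal{X}_+)}=\gamma_\mathrm{HC}(p^*,q^*).
\]
Also $C\le\gamma$ always (restrict the minimum to $p=p'$, $q=q'$). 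Hence $C=\gamma=\gamma_\mathrm{HC}(p^*,q^*)$, so $C\gamma^{S-2}=\gamma^{S-1}$ and \eqref{eq:lower_bound} and \eqref{eq:upper_bound} coincide. That observation is the paper's entire proof.

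Your coupling route is still a valid alternative, and the boundary obstacle you anticipate does not arise. Use the following coupling:
\begin{itemize}
\item Drive both chains with a common uniform $U_n$. The sample counts as lying in $\mathcal{X}_+$ if $U_n<a_n$ (resp.\ $U_n<a^*$ for the $p^*$ chain), and in $\mathcal{X}_-$ if $U_n>1-b_n$ (resp.\ $U_n>1-b^*$).
\item Fill in the actual letter from $p_n$ conditioned on that region, using independent randomness.
\item Use a second common uniform $V_n$ for the $\delta$- and $\kappa\delta$-coins.
\end{itemize}
Then $a_n\le a^*\le 1-b^*\le 1-b_n$, where the middle inequality uses $\mathcal{X}_+\cap\mathcal{X}_-=\emptyset$. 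Suppose $R_{n-1}\le R^*_{n-1}$:
\begin{itemize}
\item An up-move of the adversarial chain forces an up-move of the $p^*$ chain, unless the latter already sits at $S$, where it then stays.
\item A down-move of the $p^*$ chain forces a down-move of the adversarial chain, unless the latter sits at $1$, where it then stays.
\end{itemize}
Since both chains move by at most one step, $R_n\le R_n^*$ follows. Your example (the $p^*$ chain leaving $S$ while the adversarial chain climbs from $S-1$) would need $U_n>1-b^*$ and $U_n<a_n$ simultaneously, which is impossible.

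With $\hat H(s)=1$ iff $s=S$, both error probabilities are then dominated by those of the Hellman--Cover chain for $p^*$ versus $q^*$; the case $H=1$ is symmetric. Letting $\delta\to0$ with the balancing $\kappa$ finishes the proof.

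Compared with the paper, this route avoids the martingale and potential-function analysis behind \eqref{eq:upper_bound}, and its restriction $S\ge3$. It also proves something stronger: pathwise, the adversary can do no better than playing $p^*$ (resp.\ $q^*$) i.i.d. The paper's route, by contrast, costs two lines once Theorem~\ref{thm:error_characterisation} is available.

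As written, though, you leave the coupling unchecked and fall back on vague alternatives. The verification above is what your argument still needs.
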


\section{Achievability (Proof Sketch)}
\label{sec:achievability}
\begin{figure*}
    \centering
    \begin{subfigure}{\textwidth}
     \centering
        \begin{tikzpicture}[scale=.8]
            \node[state] (1) at (3,0) {1};
            \node[state] (2) at (6,0) {2};
            \node[state,opacity=0] (3) at (9,0) {};
            \node[state] (s) at (12,0) {$s$};
            \node[state,opacity=0] (S-2) at (15,0) {};
            \node[state] (S-1) at (18,0) {$S-1$};
            \node[state] (S) at (21,0) {$S$};
                       
            \draw (1) edge[loop below] node {$1-\delta f_+^C(x)$} (1);
            \draw (1) edge[bend left] node {$\delta f_+^C(x)$} (2);
            \draw (2) edge[bend left] node (e1) {$f_-^\gamma(x)$ } (1);

            \draw (2) edge[loop above] node {$1-f_+^\gamma(x)-f_-^\gamma(x)$} (2);
            
            \draw (2) edge[bend left] node {$f_+^\gamma(x)$} (3);
            \draw (3) edge[bend left] node{$f_-^\gamma(x)$} (2);

        
            \draw (3) edge[bend left] node {$f_+^\gamma(x)$} (s);
            \draw (s) edge[bend left] node {$f_-^\gamma(x)$} (3);

            \draw (s) edge[loop above] node {$1-f_+^\gamma(x)-f_-^\gamma(x)$} (s);
            
            \draw (s) edge[bend left] node {$f_+^\gamma(x)$} (S-2);
            \draw (S-2) edge[bend left] node {$f_-^\gamma(x)$} (s);
        
        
            \draw (S-2) edge[bend left] node {$f_+^\gamma(x)$} (S-1);
            \draw (S-1) edge[bend left] node {$f_-^\gamma(x)$} (S-2);

            \draw (S-1) edge[loop above] node {$1-f_+^\gamma(x)-f_-^\gamma(x)$} (S-1);
            
            \draw (S-1) edge[bend left] node {$f_+^\gamma(x)$} (S);
            \draw (S) edge[bend left] node {$\kappa\delta f_-^C(x)$} (S-1);
            \draw (S) edge[loop below] node {$1-\kappa\delta f_-^C(x)$} (S);
        \end{tikzpicture}
        \caption{This figure shows the transition function $\pi(\cdot|\cdot, x)$. 
        Note that transitions out of the end states $1,S$ have a small probability (proportional to $\delta$).
        Also, when moving from the end states $1$ and $S$, the dependence on $x$ is determined by $f_+^C$ and $f_-^C$ respectively;
        whereas when moving from an internal state the dependence on $x$ is determined by $f_+^\gamma$ and $f_-^\gamma$ (depending on whether it is a rightwards move or a leftwards move).}
        \label{fig:pi}
    \end{subfigure}
    \begin{subfigure}{\textwidth}
     \centering
        \begin{tikzpicture}[scale=0.8, every node/.style={scale=0.8}]
            \node[state] (1) at (3,0) {1};
            \node[state] (2) at (6,0) {2};
            \node[state,opacity=0] (3) at (9,0) {};
            \node[state] (s) at (12,0) {$s$};
            \node[state,opacity=0] (S-2) at (15,0) {};
            \node[state] (S-1) at (18,0) {$S-1$};
            \node[state] (S) at (21,0) {$S$};
                       
            \draw (1) edge[loop above] node[myblue] {+1} (1);
            \draw (1) edge[bend left] node[vermillion] {$-\frac{1-\delta\rho_0^+}{\delta\rho_0^+}$} (2);
            \draw (2) edge[bend left] node[vermillion] (e1) {$\frac{1-\delta\rho_0^+}{\delta\rho_0^+}$ \textcolor{myblue}{$+1$}} (1);
        
            \draw (2) edge[bend left] node[vermillion] {$-\frac{\gamma_0}{\delta\rho_0^+}$} (3);
            \draw (3) edge[bend left] node[vermillion] {$\frac{\gamma_0}{\delta\rho_0^+}$} (2);

        
            \draw (3) edge[bend left] node[vermillion] {$-\frac{\gamma_0^{s-2}}{\delta\rho_0^+}$} (s);
            \draw (s) edge[bend left] node[vermillion] {$\frac{\gamma_0^{s-2}}{\delta\rho_0^+}$} (3);
        
            \draw (s) edge[bend left] node[vermillion] {$-\frac{\gamma_0^{s-1}}{\delta\rho_0^+}$} (S-2);
            \draw (S-2) edge[bend left] node[vermillion] {$\frac{\gamma_0^{s-1}}{\delta\rho_0^+}$} (s);
        
        
            \draw (S-2) edge[bend left] node[vermillion] {$-\frac{\gamma_0^{S-3}}{\delta\rho_0^+}$} (S-1);
            \draw (S-1) edge[bend left] node[vermillion] {$\frac{\gamma_0^{S-3}}{\delta\rho_0^+}$} (S-2);
        
            \draw (S-1) edge[bend left] node[vermillion,align=right] {$-\frac{\gamma_0^{S-2}}{\delta\rho_0^+}+\kappa C_0\gamma_0^{S-2}$\\\textcolor{myblue}{$-\kappa C_0\gamma_0^{S-2}$}} (S);
            \draw (S) edge[bend left] node[vermillion] {$\frac{\gamma_0^{S-2}}{\delta\rho_0^+}-\kappa C_0\gamma_0^{S-2}$} (S-1);
            \draw (S) edge[loop right] node[myblue] {\hspace{-.1cm}$-\kappa C_0\gamma_0^{S-2}$} (S);
        \end{tikzpicture}
        \caption{This figure shows the increment $M_n-M_{n-1} = (\tilde{M}_n-\tilde{M}_{n-1}) + (-V(R_n)+V(R_{n-1}))$ as a function of the transition $R_{n-1}$ to $R_n$. The differences $-V(R_n)+V(R_{n-1})$ are shown in \textcolor{vermillion}{orange} and the increments $\tilde{M}_n-\tilde{M}_{n-1}$ are shown in \textcolor{myblue}{blue}.}
        \label{fig:M_n}
    \end{subfigure}
    \caption{}
    \label{fig:pi_and_M_n}
\end{figure*}
To show our upper bound, taking inspiration from Hellman and Cover \cite{hellman1970learning}, we define a FSM very similar to theirs.
They identify letters $x_-,x_+\in\mathcal{X}$ which provide the strongest evidence for $H=0$ and $H=1$ respectively.
However, when the distribution of the samples under each hypothesis is not known precisely, it may be the case that no single letter provides strong evidence for $H=0$ (or $H=1$), but a linear combination does.
Thus, we generalise from letters $(x_+,x_-)$ to non-negative functions $(f_+,f_-)$ defined on $\mathcal{X}$ such that $f_+(x)+f_-(x)\le1$, $x\in\mathcal{X}$ (see the definitions of $\gamma$ and $C$ in \eqref{eq:gamma_def} and \eqref{eq:C_def}).

Given two pairs of such functions $(f^\gamma_+,f^\gamma_-)$, $(f_+^C.f_-^C)$ and parameters $\delta,\kappa>0$, the FSM is shown in Figure~\ref{fig:pi}. From an internal state $s\in[S]\backslash\{1,S\}$, on observing $x$, the FSM transitions to state $s+1$ with probability $f^\gamma_+(x)$; to $s-1$ with probability $f^\gamma_-(x)$; and stays at $s$ with probability $1-f^\gamma_+(x)-f^\gamma_-(x)$. For the end state $s=1$ (resp. $s=S$), on observing $x$, it transitions to state $2$ (resp. $S-1$) with probability $\delta f_+(x)$  (resp. $\kappa\delta f_-(x)$) and stays on with the remaining probability. For example, when this FSM is in a state $s\in\{2,\ldots,S-1\}$, and the adversary has chosen $p$ to be the distribution of the next sample $X$, the FSM will, in the next step, move right with probability $\mathbb{E}_p[f^\gamma_+(X)]$ and move left with probability $\mathbb{E}_p[f^\gamma_-(X)]$.
For Hellman and Cover's FSM, these probabilities are $p(x_+)$ and $p(x_-)$ respectively. Let
\begin{align*}
    C_0:={}& \min_{p,p'\in\mathcal{P}_0} \frac{\mathbb{E}_p[f^C_-(X)]}{\mathbb{E}_{p'}[f^C_+(X)]}; &        \gamma_0:={}&\min_{p\in\mathcal{P}_0} \frac{\mathbb{E}_p[f^\gamma_-(X)]}{\mathbb{E}_{p}[f^\gamma_+(X)]};\\
    C_1:={}& \min_{q,q'\in\mathcal{P}_1} \frac{\mathbb{E}_q[f^C_+(X)]}{\mathbb{E}_{q'}[f^C_-(X)]}; &
    \gamma_1:={}&\min_{q,\in\mathcal{P}_1} \frac{\mathbb{E}_q[f^\gamma_+(X)]}{\mathbb{E}_{q}[f^\gamma_-(X)]};
\end{align*}
and $C(f^C_+,f^C_-)=C_1C_0$, $\gamma(f^\gamma_+,f^\gamma_-)=\gamma_1\gamma_0$.
Note that ${\mathbb{E}_p[f^\gamma_-(X)]}/{\mathbb{E}_{p}[f^\gamma_+(X)]}$ is the ratio of the probabilities of leftward transition to rightward transition at any internal state $s\in[S]\setminus\{1,S\}$ when the adversary chooses the distribution $p\in\mathcal{P}_0$; and $\gamma_0$ may be interpreted as the smallest value for this ratio that the adversary can induce under $H=0$. Similarly, $\kappa C_0$ is the smallest value of the ratio of probabilities of leftward transition from $S$ and rightward transition from state 1 that the adversary can induce under $H=0$. Notice that here the adversary may choose different distributions, $p$ in state $S$ and $p'$ in state 1, which explains the definition of $C_0$ (and the need for considering $(f_+^C,f_-^C)$ separate from $(f_+^\gamma,f_-^\gamma)$). For Hellman-Cover, $C_0=\gamma_0=p(x_-)/p(x_+)$.

The overall plan of the analysis is similar to Hellman and Cover's. We will show that (cf. \eqref{eq:hellman_achievability_1})
\begin{align}
    \lim_{n\rightarrow\infty}\frac{\frac{1}{n} \sum_{i\in[n]}\mathbb{P}_{H=0}^{\pi,\mathcal{A}}[R_i=1]}{\frac{1}{n} \sum_{i\in[n]} \mathbb{P}_{H=0}^{\pi,\mathcal{A}}[R_i=S]}&\ge \kappa C_0\gamma_0^{S-2}, 
    \label{eq:correctness_to_error_ratio_0}\\
    \lim_{n\rightarrow\infty}\frac{\frac{1}{n} \sum_{i\in[n]}\mathbb{P}_{H=1}^{\pi,\mathcal{A}}[R_i=S]}{\frac{1}{n} \sum_{i\in[n]}\mathbb{P}_{H=1}^{\pi,\mathcal{A}}[R_i=1]}&\ge\frac{C_1\gamma_1^{S-2}}{\kappa}, 
    \label{eq:correctness_to_error_ratio_1}
\end{align}
under $H=0$ and $H=1$, respectively. For a given $\varepsilon>0$, we will then argue that by choosing $\delta>0$ sufficiently small, we have  $\lim_{n\rightarrow\infty}\frac{1}{n}\sum_{i\in[n]} \mathbb{P}_{H=h}^{\pi,\mathcal{A}}[R_i\in\{1,S\}]\ge 1-\varepsilon$, for $h\in\{0,1\}$.
Setting the decision function as $\hat{H}(1)=0$ and $\hat{H}(S)=1$, we can then conclude that
\begin{align}
    \frac{1}{n}\sum_{i\in[n]} \mathbb{P}_{H=0}^{\pi,\mathcal{A}} [\hat{H}(R_i)\ne0] \le{}& \frac{1+\varepsilon \kappa C_0\gamma_0^{S-2}}{1+\kappa C_0\gamma_0^{S-2}}, \text{ and} \label{eq:ach_error_0}\\
    \frac{1}{n}\sum_{i\in[n]} \mathbb{P}_{H=1}^{\pi,\mathcal{A}} [\hat{H}(R_i)\ne1] \le{}& \frac{1 + \varepsilon ({C_1\gamma_1^{S-2}}/{\kappa}) }{1+({C_1\gamma_1^{S-2}}/{\kappa})}. \label{eq:ach_error_1}
\end{align}
Letting $\varepsilon\rightarrow0$ and choosing $\kappa^2:={{C_1\gamma_1^{S-2}}/({C_0\gamma_0^{S-2}})}$,
\begin{align}
P_e^*(S)\le\frac{1}{1+\sqrt{C(f^C_+,f^C_-)\gamma(f^\gamma_+,f^\gamma_-)^{S-2}}}. \label{eq:ach_Pe}
\end{align}
Optimising over $(f^C_+,f^C_-,f^\gamma_+,f^\gamma_-)$ will give the result.

To show \eqref{eq:correctness_to_error_ratio_0}-\eqref{eq:correctness_to_error_ratio_1}, we may not use a stationary distribution analysis (unlike Hellman and Cover who used it to obtain \eqref{eq:hellman_achievability_1}) as the presence of the adversary implies that our processes may be non-ergodic. To show \eqref{eq:correctness_to_error_ratio_0}, consider $\tilde{M}_n:=\sum_{i\in[n]}\mathbf{1}_{R_i=1}-\kappa C_0\gamma_0^{S-2}\mathbf{1}_{R_i=S}$, where $\mathbf{1}_{(.)}$ is the indicator function. If $\tilde{M}_n$ were a submartingale, we would have had $\frac{1}{n}\mathbb{E}[\tilde{M}_n]\ge0$, which would have implied \eqref{eq:correctness_to_error_ratio_0}.
However, $\tilde{M}_n$ is not a submartingale.
To fix this, we subtract a function $V:[S]\rightarrow \mathbb{R}$, defined below, of the current state from it:
\begin{align*}
    M_n:=\tilde{M}_n - V(R_n),
\end{align*}
where
\begin{align*}
    V(s):=\begin{cases}
        1 & \text{if } s=1,\\
        \frac{1}{\delta\rho_0^+}\frac{\gamma_0^{s-1}-1}{\gamma_0-1} & \text{if } s\in\{2,\ldots S-1\},\\
        \frac{1}{\delta\rho_0^+}\frac{\gamma_0^{S-1}-1}{\gamma_0-1} - \kappa C_0\gamma_0^{S-2} & \text{if } s=S,
    \end{cases}
\end{align*}
where $\rho_0^+:=\max_{p\in\mathcal{P}_0}\mathbb{E}_p[f^C_+]$. Note that the difference between the process $\tilde{M}_n$ and ${M}_n$, which is $V(R_n)$,  does not grown with $n$, and hence 
\begin{align*}
    \lim_{n\rightarrow\infty}\frac{1}{n}\mathbb{E}[M_n]= \lim_{n\rightarrow\infty} \frac{1}{n} \mathbb{E}[\tilde{M}_n].
\end{align*}
Thus, \eqref{eq:correctness_to_error_ratio_0} will follow from the following claim proved in
Appendix \ref{sec:achievability_extra}
\begin{clm}
    \label{clm:M_n_submartingale}
    Under $H=0$, $M_n$ is a submartingale.
\end{clm}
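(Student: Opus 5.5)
The plan is to verify the submartingale property directly from the definitions. Let $\mathcal{F}_{n-1}$ be generated by $X_1^{n-1}$ and $R_0^{n-1}$. The adversary's choice $p_n\in\mathcal{P}_0$ is $\mathcal{F}_{n-1}$-measurable, and $R_n$ is drawn from $\pi(\cdot|R_{n-1},X_n)$ with $X_n\sim p_n$. So it suffices to show that for every state $s=R_{n-1}$ and every $p\in\mathcal{P}_0$,
\begin{align*}
\mathbb{E}\big[M_n-M_{n-1}\,\big|\,R_{n-1}=s,\,p_n=p\big]\ge 0 .
\end{align*}
Integrability is immediate: $|M_n|\le n(1+\kappa C_0\gamma_0^{S-2})+\max_s|V(s)|$. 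The increment is
\begin{align*}
\mathbf{1}_{R_n=1}-\kappa C_0\gamma_0^{S-2}\mathbf{1}_{R_n=S}-V(R_n)+V(R_{n-1}),
\end{align*}
as depicted in Figure~\ref{fig:M_n}. When $\gamma_0=1$, read $\frac{\gamma_0^{s-1}-1}{\gamma_0-1}$ as $s-1$; the only property of $V$ used below is $V(s+1)-V(s)=\gamma_0^{s-1}/(\delta\rho_0^+)$ for $2\le s\le S-2$, together with the values $V(1)=1$ and $V(2)=1/(\delta\rho_0^+)$. The proof is then a case analysis on $s$.

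\emph{State $s=1$.} Write $a=\mathbb{E}_p[f_+^C]\le\rho_0^+$. With probability $1-\delta a$ the chain stays at $1$ and the increment is $+1$. With probability $\delta a$ it moves to $2$ and the increment is $1-V(2)+V(1)-1=1-\frac{1}{\delta\rho_0^+}$. The expected increment is therefore $1-a/\rho_0^+\ge0$; this is exactly why $\rho_0^+$ is defined as a maximum over $\mathcal{P}_0$.

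\emph{Internal states $2\le s\le S-1$.} Write $b=\mathbb{E}_p[f_+^\gamma]$ and $c=\mathbb{E}_p[f_-^\gamma]$.
\begin{itemize}
\item A right move, to $s+1$, gives increment $-\gamma_0^{s-1}/(\delta\rho_0^+)$. For $s=S-1$ this needs a check: moving into $S$ gives $-\kappa C_0\gamma_0^{S-2}-V(S)+V(S-1)$, and the $\kappa C_0\gamma_0^{S-2}$ built into $V(S)$ cancels the indicator penalty, leaving $-\gamma_0^{S-2}/(\delta\rho_0^+)$.
\item A left move, to $s-1$, gives $+\gamma_0^{s-2}/(\delta\rho_0^+)$. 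For $s=2$, check separately that moving to $1$ gives $1-V(1)+V(2)=1/(\delta\rho_0^+)$.
\item Staying gives $0$.
\end{itemize}
Hence the conditional expectation equals
\begin{align*}
\frac{\gamma_0^{s-2}}{\delta\rho_0^+}\,(c-\gamma_0 b),
\end{align*}
which is non-negative because $\gamma_0\le c/b$ for every $p\in\mathcal{P}_0$. If $b=0$ the expression is trivially $\ge0$, consistent with the $0/0$ convention.

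\emph{State $s=S$.} Write $d=\mathbb{E}_p[f_-^C]$. Staying gives $-\kappa C_0\gamma_0^{S-2}$. Moving to $S-1$, which happens with probability $\kappa\delta d$, gives
\begin{align*}
-\kappa C_0\gamma_0^{S-2}+V(S)-V(S-1)+\kappa C_0\gamma_0^{S-2}\cdot\mathbf{0},
\end{align*}
where the last term records that no indicator fires at $S-1$. Collecting terms, the expected increment is
\begin{align*}
\kappa\gamma_0^{S-2}\Big(\frac{d}{\rho_0^+}-C_0\Big).
\end{align*}
This is $\ge0$ because $C_0\le \mathbb{E}_p[f_-^C]/\mathbb{E}_{p'}[f_+^C]$ for every pair $(p,p')$, and in particular for $p'$ attaining $\rho_0^+$. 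This step is where the cross-distribution definition of $C_0$ is essential.

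The main obstacle is purely bookkeeping: at the two boundary transitions ($2\to1$ and $S-1\to S$), the indicator terms and the $V$-corrections must combine into the same geometric form as the interior transitions. Once those telescoping checks hold, every case reduces to one of the defining minima ($\gamma_0$, $C_0$) or maxima ($\rho_0^+$), so the conditional drift is non-negative for every adversarial choice.
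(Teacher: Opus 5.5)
Your proof is correct and is essentially the paper's own argument. It uses the same case split on $R_{n-1}\in\{1\}$, $\{2,\ldots,S-1\}$, $\{S\}$, and in each case the drift reduces to the definition of $\rho_0^+$, $\gamma_0$ and $C_0$ respectively; your integrability bound and the $\gamma_0=1$ reading of $V$ are small additions the paper leaves implicit.

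One slip to fix: in the $R_{n-1}=S$ case, the increment for a move to $S-1$ is just $V(S)-V(S-1)=\gamma_0^{S-2}/(\delta\rho_0^+)-\kappa C_0\gamma_0^{S-2}$. Your displayed expression carries an extra $-\kappa C_0\gamma_0^{S-2}$, double-counting the term already built into $V(S)$. The drift $\kappa\gamma_0^{S-2}(d/\rho_0^+-C_0)$ you then state is nonetheless the correct one.
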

As an example, suppose $R_{n-1}=2$. Then $R_n\in\{1,2,3\}$. Assuming $S>3$ (the calculation is similar for $S=3$),
\begin{align*}
    M_n&-M_{n-1}\\
    &= \mathbf{1}_{R_n=1}-\kappa C_0\gamma_0^{S-2}\mathbf{1}_{R_n=S} - V(R_n)+V(R_{n-1})\\
    &=\begin{cases}
        1-V(1)+V(2)=1/(\delta\rho_0^+) & \text{if } R_n=1\\
        -V(2)+V(3)=-\gamma_0/(\delta\rho_0^+) & \text{if } R_n=3\\
        0 & \text{otherwise}
    \end{cases}
\end{align*}
If the adversary chose distribution $p\in\mathcal{P}_0$ for $X_n$ (conditioned on the processes up to step $n-1$), the probability of next state being 3 is $\mathbb{E}_p[f^\gamma_+]$, and of being 1 is $\mathbb{E}_p[f^\gamma_-]$. Hence, 
\begin{align*}
\mathbb{E}_{H=0}^{\pi,\mathcal{A}}[M_n-M_{n-1}|\mathcal{F}_{n-1}]
    &= \frac{1}{\delta\rho_0^+} \mathbb{E}_p[f^\gamma_-] -\frac{\gamma_0}{\delta\rho_0^+} \mathbb{E}_p[f^\gamma_+]\\
    &\hspace{-.6cm}\ge \frac{1}{\delta\rho_0^+}\gamma_0 \mathbb{E}_p[f^\gamma_+] -\frac{\gamma_0}{\delta\rho_0^+}\mathbb{E}_p[f^\gamma_+]
    =0,
\end{align*}
where the inequality follows from $\mathbb{E}_p[f^\gamma_-]\ge \gamma_0\mathbb{E}_p[f^\gamma_+]$ (see definition of $\gamma_0$).
The cases where $R_{n-1}$ takes on other values are similar. See
Appendix \ref{sec:achievability_extra}
for details.


To show that the expected time spent in the internal states is small, for each internal state $s\in\{2,\ldots,S-1\}$, we design a supermartingale $M^{(s)}$, which counts visits to state $s$ positively, and visits to states 1 and $S$ negatively:
\begin{align*}
    M^{(s)}_n\hspace{-.1cm}= \hspace{-.11cm} \sum_{i\in[n]}\left(\mathbf{1}_{R_i=s}-\delta K_1\mathbf{1}_{R_i=1}-\delta K_S\mathbf{1}_{R_i=S}\right) - V^{(s)}(R_n),
\end{align*}
where the positive constants $K_1,K_S$ depend only on $f^\gamma_+,f^\gamma_-,\kappa,$ and $\gamma_0$, and $V^{(s)}$ is an appropriate function chosen so that $M^{(s)}_n$ is a supermartingale.
We then show that by picking
$\delta\ll1$, we can ensure the expected  number of visits to state $s$ is much smaller than those to states 1 and $S$. See
Appendix \ref{sec:achievability_extra}
for details.


\section{Converse}
\label{sec:converse}
First note that there is an obvious lower bound, which is a corollary of Hellman and Cover's lower bound for the simple-vs-simple case. Suppose the adversary's strategy is to always choose $p\in\mathcal{P}_0$ under $H=0$ and $q\in\mathcal{P}_1$ under $H=1$. Then, for any FSM with $S$ states, the asymptotic average risk is at least that given by \eqref{eq:hellman_result}. By maximizing over $p$ and $q$, we have the following lower bound on the asymptotic minimax risk.
\begin{lm}[Corollary of Theorem 2 in \cite{hellman1970learning}]
    \label{lm:Hellman_converse}
    \begin{align}
        P_e^*(S)\ge\frac{1}{1+\bar{\gamma}^\frac{S-1}{2}},\label{eq:Hellman_converse}
    \end{align}
    where
    \begin{align}
        \bar{\gamma}:=\min_{\substack{p\in\mathcal{P}_0,\;q\in\mathcal{P}_1 }}\gamma_\mathrm{HC}(p,q). \label{eq:gamma_bar_def}
    \end{align}
\end{lm}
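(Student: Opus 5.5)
The plan is to reduce Lemma~\ref{lm:Hellman_converse} to Hellman and Cover's simple-versus-simple converse by restricting the adversary to stationary i.i.d.\ strategies. First fix an arbitrary detector $(\pi,\hat{H})$ with $S$ states and initial state $R_0$. Then fix an arbitrary pair $p\in\mathcal{P}_0$, $q\in\mathcal{P}_1$, and let $\mathcal{A}^{p,q}$ be the adversarial strategy that always outputs $p$ under $H=0$ and $q$ under $H=1$, ignoring past samples and states. This strategy is admissible, since constant functions are allowed. Under it, $(R_n)$ is a time-homogeneous Markov chain with kernel $\sum_x p(x)\pi(\cdot|s,x)$, and similarly for $q$. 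Hence
\[
\sup_{\mathcal{A}} P_e(\pi,\hat H;\mathcal{A})\ge P_e(\pi,\hat H;\mathcal{A}^{p,q}).
\]

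Next, invoke Hellman and Cover's Theorem 2 for the pair $(p,q)$. Their argument bounds, for any FSM and any stationary distributions (or, more generally, limiting Ces\`aro occupancies) $\mu_0,\mu_1$ of the two induced chains,
\[
\max_{s,s'}\frac{\mu_1(s)\mu_0(s')}{\mu_0(s)\mu_1(s')}\le\gamma_{\mathrm{HC}}(p,q)^{S-1},
\]
and from this deduces that the larger of the two error probabilities is at least $1/(1+\gamma_{\mathrm{HC}}(p,q)^{(S-1)/2})$. Two differences from our setting need care.

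First, our $P_e$ uses Ces\`aro averages and not $\limsup$ of instantaneous errors, and the chain may be periodic or reducible. For a finite Markov chain, the Ces\`aro averages $\frac1n\sum_{i\le n}\mathbb{P}[R_i=\cdot]$ converge to a stationary distribution $\mu_h$ of the chain (which one depends on $R_0$). Hellman and Cover's ratio bound holds for stationary distributions of the chains, so it applies to these $\mu_h$, and the error under $H=h$ is $\mu_h(\hat H^{-1}(1-h))$. There is a subtlety when the Ces\`aro limits under $h=0,1$ are supported on different recurrent classes. In that case I would argue directly, via the birth-death/flow inequality that underlies their proof, that $\mu_1(s)\mu_0(s')/(\mu_0(s)\mu_1(s'))$ is bounded as claimed whenever the denominators are positive. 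If instead some denominator vanishes, the error is large anyway.

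Second, they minimize the Bayes risk while we use the maximum over $h$. This is harmless, because the maximum dominates any convex combination. More directly, write $a=\mu_0(\hat H^{-1}(1))$ and $b=\mu_1(\hat H^{-1}(0))$, and assume the ratio bound yields $(1-a)(1-b)\le\gamma_{\mathrm{HC}}^{S-1}ab$. If $\max(a,b)<e:=1/(1+\gamma_{\mathrm{HC}}^{(S-1)/2})$, then $(1-a)(1-b)/(ab)>((1-e)/e)^2=\gamma_{\mathrm{HC}}^{S-1}$, a contradiction.

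Finally, since the resulting bound holds for every $p,q$ and every detector, maximize the lower bound over $(p,q)$, which amounts to minimizing $\gamma_{\mathrm{HC}}(p,q)$, and then take the infimum over detectors. This gives \eqref{eq:Hellman_converse} with $\bar\gamma$. The minimum in \eqref{eq:gamma_bar_def} is attained, or at least the bound passes to the infimum, by taking a sequence $(p_k,q_k)$ with $\gamma_{\mathrm{HC}}(p_k,q_k)\to\bar\gamma$, using the continuity of $t\mapsto 1/(1+t^{(S-1)/2})$ and the compactness of $\mathcal{P}_0,\mathcal{P}_1$. Zero entries make $\gamma_{\mathrm{HC}}$ infinite, in which case the bound is trivial.

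I expect the main obstacle to be the rigorous transfer of Hellman and Cover's inequality to our Ces\`aro, worst-case, possibly non-ergodic formulation, and specifically justifying the product-ratio bound for the particular limiting occupancy measures reached from $R_0$.
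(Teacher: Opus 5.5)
Your overall route is the paper's: fix a detector, let the adversary play the constant pair $(p,q)$, apply Hellman and Cover, then optimise over $(p,q)$ and over detectors. The paper does this by bare citation and does not discuss the Ces\`aro-versus-$\limsup$ issue. So working with the product-ratio bound on limiting occupancies, as you do, is the right choice: a lower bound on the $\limsup$ of instantaneous errors does not by itself lower-bound a Ces\`aro average.

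The step that would not work as described is your treatment of the reducible case. As a statement about arbitrary stationary distributions, the ratio bound is false: point masses on two different absorbing states violate it. You also misdescribe the bad case. If $\gamma_{\mathrm{HC}}(p,q)<\infty$, then $p$ and $q$ have the same support. Writing $P_0,P_1$ for the kernels induced by $p$ and $q$, the two kernels then have the same transition graph. So both Ces\`aro limits from $R_0$ charge the same recurrent classes, only with hypothesis-dependent absorption weights. The Hellman--Cover cut/flow inequality cannot control those weights. Across a cut separating two recurrent classes the stationary flow is zero under both hypotheses, so the inequality degenerates to $0\le 0$. For instance, take $S=3$ with states $1,3$ absorbing and $R_0=2$ moving to $3$ on $x_+$ and to $1$ on $x_-$. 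The cross-class ratio $\mu_1(3)\mu_0(1)/(\mu_0(3)\mu_1(1))$ equals $\gamma_{\mathrm{HC}}(p,q)$. It comes entirely from the transient phase, which no stationary flow sees. So ``arguing directly via the flow inequality'' gives nothing here.

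A clean repair is to pass to the reset machine $\pi^\varepsilon(t|s,x)=(1-\varepsilon)\pi(t|s,x)+\varepsilon\mathbf{1}_{t=R_0}$. It has the following properties:
\begin{itemize}
\item It is still an $S$-state FSM, so the Hellman--Cover cut argument applies to it with the same $\gamma_{\mathrm{HC}}(p,q)$.
\item Its induced chains have a single recurrent class, common to both hypotheses: the states reachable from $R_0$.
\item Its unique stationary law under $H=h$ is the Abel mean $\mu_h^\varepsilon=\varepsilon\sum_{k\ge 0}(1-\varepsilon)^k\,\delta_{R_0}P_h^k$.
\end{itemize}
Every cut of the support now carries positive flow. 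Hence $\mu_1^\varepsilon(s)\mu_0^\varepsilon(s')\le\gamma_{\mathrm{HC}}(p,q)^{S-1}\mu_0^\varepsilon(s)\mu_1^\varepsilon(s')$ for all $s,s'$. Letting $\varepsilon\to0$, the Abel means converge to the Ces\`aro limits $\mu_h$ from $R_0$, and this product-form inequality passes to the limit. Summing it over $s\in\hat{H}^{-1}(1)$ and $s'\in\hat{H}^{-1}(0)$ gives $(1-a)(1-b)\le\gamma_{\mathrm{HC}}(p,q)^{S-1}ab$ without any division. Your separate vanishing-denominator case is then unnecessary, and the rest of your argument goes through unchanged.
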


As pointed out in Remark \ref{rem:gamma_relation_to_pvp}, $\gamma$ and $\bar{\gamma}$ are $\sup$-$\min$ and $\min$-$\sup$ versions of the same optimisation problem. Our converse result \eqref{eq:lower_bound} follows from the fact that $\min$-$\sup$ equals $\sup$-$\min$ for this problem.
\begin{lm}
    \label{lm:minimax}
    \begin{align}
\min_{p,q} \sup_{\substack{f_+,f_-}} \gamma(f_+,f_-;p,q)
 = \sup_{\substack{f_+,f_-}} \min_{p,q} \;\gamma(f_+,f_-;p,q), \label{eq:minimax_lemma}
    \end{align}
    where the minima are over $p\in\mathcal{P}_0$, $q\in\mathcal{P}_1$; the suprema are over non-negative functions $f_+$ and $f_-$ defined on $\mathcal{X}$ s.t. $f_+(x)+f_-(x)\leq 1$, $x\in\mathcal{X}$; and $\gamma(f_+,f_-;p,q)=\frac{\mathbb{E}_q[f_+(X)]\mathbb{E}_p[f_-(X)]}{\mathbb{E}_q[f_-(X)]\mathbb{E}_p[f_+(X)]}$.
    

\end{lm}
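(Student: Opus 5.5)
The inequality ``$\ge$'' in \eqref{eq:minimax_lemma} is the trivial direction already noted in Remark~\ref{rem:gamma_relation_to_pvp}: $\sup\min\le\min\sup$, and the inner supremum equals $\gamma_\mathrm{HC}(p,q)$. So it suffices to show that $\gamma\ge t$ for every $t<\bar\gamma$ (with $\bar\gamma$ as in \eqref{eq:gamma_bar_def}). The first step is to decouple the roles of $p$ and $q$. Write
\begin{align*}
\gamma(f_+,f_-;p,q)=\frac{\mathbb{E}_q[f_+]}{\mathbb{E}_q[f_-]}\cdot\frac{\mathbb{E}_p[f_-]}{\mathbb{E}_p[f_+]}.
\end{align*}
The first factor depends only on $q$ and the second only on $p$, so the minimum over $p\in\mathcal{P}_0$, $q\in\mathcal{P}_1$ factorizes into a product of two separate minima. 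Hence $\gamma\ge t$ follows once I exhibit $(f_+,f_-)$ and constants $c,s>0$ with $cs=t$ such that
\begin{align*}
\mathbb{E}_q[f_+]-c\,\mathbb{E}_q[f_-]\ge 0\ \ \forall q\in\mathcal{P}_1,\qquad \mathbb{E}_p[f_-]-s\,\mathbb{E}_p[f_+]\ge 0\ \ \forall p\in\mathcal{P}_0,
\end{align*}
with denominators nonzero. I will aim for strict positivity, which handles the $0/0$ convention. The point is that both families of constraints are \emph{linear} in $(f_+,f_-)$, whereas the original ratio is not.

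Fix any $c,s>0$ with $cs=t$ and argue by contradiction. Suppose no feasible $(f_+,f_-)$ in the convex compact set $\mathcal{F}=\{f_\pm\ge0,\ f_++f_-\le1\}$ makes both expressions strictly positive. Consider the bilinear game
\begin{align*}
L(f;\lambda,p,q)=\lambda\big(\mathbb{E}_q[f_+]-c\,\mathbb{E}_q[f_-]\big)+(1-\lambda)\big(\mathbb{E}_p[f_-]-s\,\mathbb{E}_p[f_+]\big).
\end{align*}
It is linear in $f$ and linear in the mixed strategy over the pair of linear functionals. Since $\mathcal{P}_0,\mathcal{P}_1$ are convex and compact, von Neumann's (or Sion's) minimax theorem gives
\begin{align*}
\max_f\min_{\lambda,p,q}L=\min_{\lambda,p,q}\max_f L .
\end{align*}
Mixing over $p$ and $q$ inside $\mathcal{P}_0,\mathcal{P}_1$ stays in these sets by convexity, so the minimizer can be taken as a single $(\lambda,p,q)$. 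Infeasibility means $\max_f\min L\le0$, so there exist $\lambda\in[0,1]$, $p\in\mathcal{P}_0$, $q\in\mathcal{P}_1$ with $\max_f L\le 0$. Testing $f_+=\mathbf{1}_{\{x\}}$ and $f_-=\mathbf{1}_{\{x\}}$ for each letter gives, for all $x$,
\begin{align*}
\lambda q(x)\le(1-\lambda)s\,p(x),\qquad (1-\lambda)p(x)\le\lambda c\,q(x).
\end{align*}
If $\lambda\in(0,1)$, these yield $\max_x q/p\le(1-\lambda)s/\lambda$ and $\max_x p/q\le\lambda c/(1-\lambda)$. Hence $\gamma_\mathrm{HC}(p,q)\le cs=t<\bar\gamma$, a contradiction. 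If $\lambda=0$ or $\lambda=1$, the same inequalities force $p\equiv0$ or $q\equiv0$, which is impossible for distributions.

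Thus a strictly feasible $f$ exists, so $\min_{p,q}\gamma(f_+,f_-;p,q)\ge cs=t$. Letting $t\uparrow\bar\gamma$ gives $\gamma\ge\bar\gamma$. Together with \eqref{eq:gamma-le-gammaHC}, this proves \eqref{eq:minimax_lemma}.

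The main obstacle is bookkeeping rather than ideas. First, the minimax step needs to produce \emph{strict} positivity, so that the ratios in \eqref{eq:gamma_def} are not $0/0$ and are bounded below. For this I would run the argument with $\max_f\min L>0$ and use compactness to get a uniform positive margin. Second, the degenerate case $\bar\gamma=0$ is trivial, and if $\bar\gamma=\infty$ the same argument applies for every finite $t$. I would also double-check that $\gamma_\mathrm{HC}(p,q)$ with zero entries is interpreted consistently (as $+\infty$ when some $p(x)=0<q(x)$), so that the contradiction step is valid in all cases.
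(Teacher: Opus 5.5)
Your proof is correct, and it takes a genuinely different route from the paper's.

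The paper never linearizes. It rescales $f_\pm$ into subsets $\Delta_0(\mathcal{X}),\Delta_1(\mathcal{X})$ of the simplex, to secure upper semicontinuity under the $0/0$ convention. It then uses the two product decompositions $\gamma=\gamma_+(f_+)\gamma_-(f_-)=\gamma_1(q)\gamma_0(p)$ and applies Sion's theorem to the nonlinear ratio itself. It interchanges one min with one sup at a time, after checking that each partially optimized function is quasi-convex in the remaining distribution and quasi-concave in the remaining function; one of these checks itself needs an inner application of Sion.

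You instead fix target levels $c,s$ for the two factors, which makes the constraints on $(f_+,f_-)$ linear. A single bilinear minimax (really a Gordan/Farkas-type theorem of the alternative) then either produces a strictly feasible $f$ or a certificate $(\lambda,p,q)$. Evaluating the certificate at the points $(\mathbf{1}_{\{x\}},0)$ and $(0,\mathbf{1}_{\{x\}})$ of $\mathcal{F}$ turns it into $\gamma_{\mathrm{HC}}(p,q)\le cs<\bar\gamma$.

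\emph{What your route buys.} It uses only one standard minimax on compact convex sets. Strict feasibility makes both numerators positive, so the $0/0$ convention never bites on the $\gamma$ side. The semicontinuity and extended-value bookkeeping the paper has to manage does not arise.

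\emph{What the paper's route buys.} It relies only on the product structure and on separate quasi-convexity/quasi-concavity, and it proves the interchange directly, without passing through $\gamma_{\mathrm{HC}}$. Your contradiction step, by contrast, depends on the bound $\sup_{f_+,f_-}\gamma(f_+,f_-;p,q)\le\gamma_{\mathrm{HC}}(p,q)$ from Remark~\ref{rem:gamma_relation_to_pvp} and on testing indicators, so it is tailored to this particular objective.

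\emph{Two points to state explicitly when you write it up.} First, set $\ell_q(f)=\mathbb{E}_q[f_+]-c\,\mathbb{E}_q[f_-]$ and $m_p(f)=\mathbb{E}_p[f_-]-s\,\mathbb{E}_p[f_+]$. Because $L$ is not jointly convex in $(\lambda,p,q)$ (the terms $\lambda q$ and $(1-\lambda)p$ are bilinear), the minimax must be taken over the compact convex set $K=\{\lambda\ell_q+(1-\lambda)m_p:\lambda\in[0,1],\,p\in\mathcal{P}_0,\,q\in\mathcal{P}_1\}$. This is the convex hull of the two families of affine functionals, which is exactly what your ``mixing'' remark points to. Second, do the two indicator tests one at a time, with the other function set to zero, since $f_+=f_-=\mathbf{1}_{\{x\}}$ violates $f_++f_-\le 1$.
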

The above lemma would have followed from Sion's minimax theorem~\cite{sion1958general} had $\gamma(f_+,f_-;p,q)$ been, for a fixed $(f_+,f_-)$, quasi-convex (and lower semicontinuous) jointly in $(p,q)$, which has a compact domain; and quasi-concave (and upper semicontinuous) jointly in $(f_+,f_-)$ for a fixed $(p,q)$. However, this is not the case in general. Nevertheless, we are able to prove the lemma by rewriting the L.H.S. of \eqref{eq:minimax_lemma} as
\begin{align*}
     \min_{\substack{p\in\mathcal{P}_0}}\;\min_{\substack{q\in\mathcal{P}_1}}\;\sup_{f_+}\; \sup_{f_-} \;\gamma(f_+,f_-;p,q),
\end{align*}
where the suprema are over $f_+$ and $f_-$ which take values in\footnote{The equivalence can be seen by noting that $\gamma(f_+,f_-;p,q)$ is invariant under rescaling $f_+$ or $f_-$, separately. Hence, the optimal value remains the same whether we ask for $\sum_xf_+(x)=\sum_xf_-(x)=1$, or for $\max_xf_+(x)+f_-(x)\le1$.} $\{f:\mathcal{X}\rightarrow [0,1]:\sum_x f(x)=1\}$, and then applying Sion's minimax theorem four times to swap each min with each max (see
Appendix \ref{sec:converse_extra}
for details).

\section{Examples}
\subsection{Optimal $f_+$ may not be indicator}
\label{sec:example_nonindicator}
In this section, we give an example to illustrate the benefit of generalising Hellman and Cover's FSM to ours.\\
Let $\mathcal{X}=\{a,b,c\}$, $\mathcal{P}_0=\{p\in\Delta(\mathcal{X}):p(a)+2p(b)=\frac{1}{2}\}$, and $\mathcal{P}_1= \left\{ \left( \frac{1}{3},\frac{1}{3},\frac{1}{3}\right)\right\}$.
As we show in%
Appendix \ref{sec:example_nonindicator_extra},
for this example, the solution to \eqref{eq:gamma_def} is $\gamma=4$ and it has a unique maximiser (up to scaling) $f_+=\left(\frac{1}{3},1,0\right),f_-=(0,0,1)$.
Hence, from states $\{2,\ldots,S-1\}$, our FSM moves to the right with probability 1/3 on observing $a$, and deterministically when observing $b$; and it moves to the left on observing $c$.
This gives a strictly better exponent than picking some subsets $\mathcal{X}_+, \mathcal{X}_- \subseteq\mathcal{X}$, always moving right on observing $x\in\mathcal{X}_+$, and always moving left on observing $x\in\mathcal{X}_-$.

\section{Discussion}
Memory limitation is a realistic resource constraint, and it is important to understand its impact on adversarially robust statistical inference. In the context of the problem considered here, many aspects are worth studying, including other contamination models,  
speed of convergence, and power of randomness.
It is known from the results of \cite{hellman1970learning} and \cite{berg2020binary} that randomness is a useful resource for (non-adversarial) hypothesis testing using finite memory time-invariant algorithms.
For delineating the power of randomness, three variations of the problem we studied could be considered: (i) deterministic FSM, (ii) randomized FSM whose randomness (random coins) are known to the adversary in advance (note that the adversary we considered does not know the randomness that the FSM will use in the future), and (iii) a weaker adversary who does not see the FSM's states.
Note that our converse is, in fact, based on this weaker adversary.




\bibliographystyle{IEEEtran}
\bibliography{bib}
\flushcolsend
\clearpage

\appendices
\onecolumn
\section{Missing proofs from Section \ref{sec:results}}
\label{sec:results_extra}
\begin{proof}[Proof of Corollary \ref{coro:exponent_characterisation}]
    Firstly, \eqref{eq:lower_bound} implies $\Gamma\le\frac{1}{2}\log\gamma$.
    Furthermore, on the one hand, when $C>0$, \eqref{eq:upper_bound} implies $\Gamma\ge\frac{1}{2}\log\gamma$.
    On the other hand, when $C=0$, $\Gamma\ge\frac{1}{2}\log\gamma$ is vacuously true since $C=0\implies\gamma=0$.
    To see this, note that $C=0$ implies that there exists either a $q\in\mathcal{P}_1$ with $\mathbb{E}_q[f_+]=0$, or a $p\in\mathcal{P}_0$ with $\mathbb{E}_p[f_-]=0$.
    In either case, due to the convention that $\frac{0}{0}=0$, we have $\gamma=0$.
\end{proof}
\begin{proof}[Proof of Theorem \ref{thm:weak_LFD}]
 By \eqref{eq:gamma-le-gammaHC},
    \begin{align}
        \gamma \leq \min_{\substack{p\in\mathcal{P}_0}, \, q\in\mathcal{P}_1} \gamma_{\mathrm{HC}}(p,q) \leq \gamma_{\mathrm{HC}}(p^*,q^*).\label{eq:LFDcase-converseproof}
    \end{align}
    Now let $f_+=\mathbf{1}_{\mathcal{X}_+}, f_-=\mathbf{1}_{\mathcal{X}_-}$ where $\mathcal{X}_+:= \argmax_{x\in\mathcal{X}} \frac{q^*(x)}{p^*(x)}$, and $\mathcal{X}_-:= \argmin_{x\in\mathcal{X}}\frac{q^*(x)}{p^*(x)}$.
    Then,
    \begin{align*}
        \gamma\ge{}& C,\\
        \ge{}& \min_{\substack{q,q'\in\mathcal{P}_1 \\ p,p'\in\mathcal{P}_0}} \frac{\mathbb{E}_q[f_+]\mathbb{E}_p[f_-]}{\mathbb{E}_{q'}[f_-]\mathbb{E}_{p'}[f_+]}, \\
        ={}& \min_{\substack{q,q'\in\mathcal{P}_1 \\ p,p'\in\mathcal{P}_0}} \frac{q(\mathcal{X}_+)p(\mathcal{X}_-)}{q'(\mathcal{X}_-) p'(\mathcal{X}_+)}, \\
        ={}& \frac{\min_{q\in\mathcal{P}_1}q(\mathcal{X}_+) \min_{p\in\mathcal{P}_0}p(\mathcal{X}_-)}{\max_{q\in\mathcal{P}_1}q(\mathcal{X}_-) \max_{p\in\mathcal{P}_0}p(\mathcal{X}_+)}, \\
        ={}& \frac{q^*(\mathcal{X}_+)p^*(\mathcal{X}_-)}{q^*(\mathcal{X}_-)p^*(\mathcal{X}_+)} = \gamma_\mathrm{HC}(p^*,q^*),
    \end{align*}
    where the last line follows from the second last one via \eqref{eq:weak_LFD_def_0} and \eqref{eq:weak_LFD_def_1}. Together with \eqref{eq:LFDcase-converseproof}, this implies that, whenever an LFD pair $(p^*,q^*)$ exists, $\gamma=C=\gamma_\mathrm{HC}(p^*,q^*)$, and hence, the upper and lower bounds in Theorem~\ref{thm:error_characterisation} coincide to give \eqref{eq:weak_LFD_achivability}.
\end{proof}

\section{Proof of Achievability (Continued from Section~\ref{sec:achievability})}
\label{sec:achievability_extra}
We pick up the proof of achievability from where we left off in Section~~\ref{sec:achievability}. So far, we have shown \eqref{eq:correctness_to_error_ratio_0} (the proof of Claim~\ref{clm:M_n_submartingale} is given towards the end of this section). Similarly, we can show \eqref{eq:correctness_to_error_ratio_1}. Let $\varepsilon>0$. We will show that, for sufficiently small $\delta>0$, 
\begin{align}
    \lim_{n\rightarrow\infty}\frac{1}{n}\sum_{i\in[n]} \mathbb{P}_{H=0}^{\pi,\mathcal{A}}[R_i\in\{1,S\}]\ge 1-\varepsilon. \label{eq:ach_prob_1_and_S}
\end{align} 
This will allow us to conclude \eqref{eq:ach_error_0}; the proof of \eqref{eq:ach_error_1} is similar. As discussed in Section~\ref{sec:achievability}, we can then let $\varepsilon\rightarrow0$ and choose $\kappa^2:={{C_1\gamma_1^{S-2}}/({C_0\gamma_0^{S-2}})}$, to get
\begin{align*}
P_e^*(S)\le\frac{1}{1+\sqrt{C(f^C_+,f^C_-)\gamma(f^\gamma_+,f^\gamma_-)^{S-2}}}.
\end{align*}
Optimising over $(f^C_+,f^C_-,f^\gamma_+,f^\gamma_-)$ will give the result.

To show \eqref{eq:ach_prob_1_and_S}, we will, in fact, assume that $f^\gamma_+,f^\gamma_-\ge\eta$ for a (small) $\eta>0$, which we will take arbitrarily small to arrive at the result\footnote{So effectively we will first be showing the achievability of \eqref{eq:upper_bound} where $\gamma$ is replaced with 
\[\gamma_\eta:=  \sup_{\substack{f_+,f_-}} \; \min_{\substack{p\in\mathcal{P}_0}, \, q\in\mathcal{P}_1} \frac{\mathbb{E}_q[f_+(X)]\mathbb{E}_p[f_-(X)]}{\mathbb{E}_q[f_-(X)]\mathbb{E}_p[f_+(X)]},,\]
where the suprema are over functions $f_+$ and $f_-$ defined on $\mathcal{X}$ s.t. $f_+(x)+f_-(x)\leq 1$,  $f_+(x)\geq\eta$, and $f_-(x)\geq \eta$, $x\in\mathcal{X}$. The achievability result will follow from $\lim_{\eta\rightarrow 0} \gamma_\eta = \gamma$.}. The intuition is that, this will ensure that the probability with which the FSM moves to the right/left from an internal state $s\in[S]\setminus\{1,S\}$ will now be at least $\eta$ irrespective of the adversary's strategy.

Now consider the following process, defined for each $s\in[S]\backslash\{1,S\}$:
\begin{align*}
    \tilde{M}^{(s)}_n:= \sum_{i\in[n]}\mathbf{1}_{R_i=s}-\delta\left(\frac{1-\eta}{\eta\gamma_0^{s-1}} \rho_0^+\mathbf{1}_{R_i=1} + \kappa\left(\frac{1-\eta}{\eta}\right)^{S-1-s}\rho_0^-\mathbf{1}_{R_i=S}\right),
\end{align*}
where $\rho_0^-:=\max_{p\in\mathcal{P}_0}\mathbb{E}_p[f^C_-(X)]$.
If $\tilde{M}^{(s)}_n$ were a supermartingale, we would have had
\begin{align}
    \frac{1}{n}\mathbb{E}_{H=0}^{\pi,\mathcal{A}}[\tilde{M}^{(s)}_n]\le{}&0.\label{eq:M_s_n_le_0}\\
    \implies\frac{1}{n} \sum_{i\in[n]} \mathbb{P}_{H=0}^{\pi,\mathcal{A}}[R_i=s]\le{}& \delta\max\left\{\frac{1-\eta}{\eta\gamma_0^{s-1}} \rho_0^+\,,\, \kappa\left(\frac{1-\eta}{\eta}\right)^{S-1-s}\rho_0^-\right\} \frac{1}{n} \sum_{i\in[n]}\mathbb{P}_{H=0}^{\pi,\mathcal{A}}[R_i\in\{1,S\}].
\end{align}
Now, pick $\delta>0$ small enough so that
\begin{align*}
    \delta\max_{s\in[S]}\max\left\{\frac{1-\eta}{\eta\gamma_0^{s-1}} \rho_0^+\,,\, \kappa\left(\frac{1-\eta}{\eta}\right)^{S-1-s}\rho_0^-\right\}\le \frac{\varepsilon}{S(1-\varepsilon).}
\end{align*}
Then we get,
\begin{align}
    \frac{1}{n} \sum_{i\in[n]} \mathbb{P}_{H=0}^{\pi,\mathcal{A}}[R_i=s]\le{}& \frac{\varepsilon}{S(1-\varepsilon)} \frac{1}{n} \sum_{i\in[n]} \mathbb{P}_{H=0}^{\pi,\mathcal{A}}[R_i\in\{1,S\}],\\
    \implies\frac{1}{n} \sum_{i\in[n]} \mathbb{P}_{H=0}^{\pi,\mathcal{A}}[R_i\in[S]\backslash\{1,S\}]\le{}&\frac{\varepsilon}{1-\varepsilon}\frac{1}{n} \sum_{i\in[n]} \mathbb{P}_{H=0}^{\pi,\mathcal{A}}[R_i\in\{1,S\}],\\
    \implies \frac{1}{n} \sum_{i\in[n]} \mathbb{P}_{H=0}^{\pi,\mathcal{A}} [R_i\in\{1,S\}]\ge{}& 1-\varepsilon,\label{eq:visit_1_S_is_1}
\end{align}
which is what we want to show (in the limit as $n\rightarrow 0$, see \eqref{eq:ach_prob_1_and_S}).
However, $\tilde{M}^{(s)}_n$ is not a supermartingale.
We fix this by subtracting from it the following function $V^{(s)}:[S]\rightarrow\mathbb{R}$ of the current state $R_n$:
\begin{align*}
    V^{(s)}(t)=\begin{cases}
        \frac{1-\eta}{\eta}\left(\frac{\gamma_0^{s-1}-1}{\gamma_0^{s-1}(\gamma_0-1)}-\frac{\delta\rho_0^+}{\gamma_0^{s-1}}\right) &\text{if } t=1,\\
        \frac{1-\eta}{\eta}\frac{\gamma_0^{s-t}-1}{\gamma_0^{s-t}(\gamma_0-1)} &\text{if } t\in\{2,\ldots,s-1\},\\
        1&\text{if } t\in\{s,s+1\},\\
        \frac{\left(\frac{1-\eta}{\eta}\right)^{t-s}-1}{\frac{1-\eta}{\eta}-1} & \text{if } t\in\{s+2,\ldots,S-1\},\\
        \frac{\left(\frac{1-\eta}{\eta}\right)^{S-s}-1}{\frac{1-\eta}{\eta}-1} -\kappa\delta\rho_0^-\left(\frac{1-\eta}{\eta}\right)^{S-s-1}& \text{if } t=S.
    \end{cases}
\end{align*}
\begin{clm}
    \label{clm:M_s_supermartingale}
    $M^{(s)}_n:=\tilde{M}^{(s)}_n-V(R_n)$ is a supermartingale.
\end{clm}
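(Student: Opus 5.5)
The plan is to verify the supermartingale property one step at a time. Write $\mathcal{F}_{n-1}$ for the history up to step $n-1$ and fix the distribution $p\in\mathcal{P}_0$ chosen by the adversary for $X_n$. The increment is
\[
M^{(s)}_n-M^{(s)}_{n-1}=\mathbf{1}_{R_n=s}-\delta a\,\mathbf{1}_{R_n=1}-\delta b\,\mathbf{1}_{R_n=S}-V^{(s)}(R_n)+V^{(s)}(R_{n-1}),
\]
with $a:=\frac{1-\eta}{\eta\gamma_0^{s-1}}\rho_0^+$ and $b:=\kappa\left(\frac{1-\eta}{\eta}\right)^{S-1-s}\rho_0^-$. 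Since $R_n\in\{R_{n-1}-1,R_{n-1},R_{n-1}+1\}$, the conditional expectation given $\mathcal{F}_{n-1}$ depends only on $t:=R_{n-1}$ and on $p$. It therefore suffices to show that, for every $t\in[S]$ and every $p\in\mathcal{P}_0$, the conditional expected increment is $\le0$.

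At an internal state $t$, write $u:=\mathbb{E}_p[f^\gamma_+]$ and $v:=\mathbb{E}_p[f^\gamma_-]$. The only facts about the adversary's choice I will use are:
\begin{itemize}
\item $u,v\ge\eta$ (because $f^\gamma_\pm\ge\eta$ pointwise);
\item $u+v\le1$;
\item $v\ge\gamma_0 u$ (by the definition of $\gamma_0$).
\end{itemize}
At the end states I will use $\mathbb{E}_p[f^C_+]\le\rho_0^+$ and $\mathbb{E}_p[f^C_-]\le\rho_0^-$. Set $r:=(1-\eta)/\eta$.

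The first step is to tabulate the differences $V^{(s)}(t\pm1)-V^{(s)}(t)$, as was done for $V$ in Figure~\ref{fig:M_n}.
\begin{itemize}
\item For $2\le t<s-1$: $V^{(s)}(t+1)-V^{(s)}(t)=-r\gamma_0^{-(s-t)}$, so the steps form a geometric sequence with ratio $\gamma_0$.
\item For $s+1\le t\le S-2$: the steps grow geometrically with ratio $r$.
\item At $t=1$ and $t=S$, the $\delta$ corrections in $V^{(s)}(1)$ and $V^{(s)}(S)$ are exactly what cancel the $-\delta a$ and $-\delta b$ terms.
\end{itemize}
I would organise the case check into four regions.

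\emph{Region $2\le t\le s-1$.} The increment is $v\cdot\Delta_{\rm left}+u\cdot\Delta_{\rm right}$. The leftward step is the larger one by a factor $\gamma_0$, with the sign making a left move decrease $M$. Hence $v\ge\gamma_0u$ gives drift $\le0$, exactly mirroring the computation for Claim~\ref{clm:M_n_submartingale} with inequalities reversed. For $t=s-1$, a move to $s$ also gains the $+1$ from the indicator $\mathbf{1}_{R_n=s}$; this is why $V^{(s)}(s)=1$ is offset against $V^{(s)}(s-1)$.

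\emph{Region $t=s$ and $t=s+1$.} Here staying at $s$ contributes $+1$. Each move out of $s$ changes $V^{(s)}$ by an amount of order $r$, and leaving happens with probability at least $\eta$. So $(1-u-v)\cdot1\le(1-\eta)$ is dominated by (probability of moving)$\times$(gain of order $(1-\eta)/\eta$). The factor $r$ in $V^{(s)}$ is designed for precisely this.

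\emph{Region $s+2\le t\le S-1$.} No $+1$ is collected here. A right move costs $V$-increments growing like $r^{t-s}$, and from $t$ the FSM moves right with probability at most $1-\eta$ and left with probability at least $\eta$. Since $r\eta=1-\eta$, the drift is $\le0$.

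\emph{End states.} From $t=1$ the chain moves to $2$ with probability $\delta\mathbb{E}_p[f^C_+]\le\delta\rho_0^+$. I need to check that $\delta\rho_0^+\,(V^{(s)}(1)-V^{(s)}(2))\le\delta a$, which is what the constant $a$ and the $-\delta\rho_0^+/\gamma_0^{s-1}$ term in $V^{(s)}(1)$ encode. The case $t=S$ is symmetric with $\kappa\delta\rho_0^-$ and $b$.

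I expect the main obstacle to be the boundary bookkeeping rather than any idea. Three things need care:
\begin{itemize}
\item the junctions $t\in\{s-1,s,s+1,s+2\}$, where the formulas for $V^{(s)}$ change;
\item degenerate geometries such as $s=2$ or $s=S-1$;
\item the case $\gamma_0\le1$ (including $\gamma_0=1$, where the expressions $(\gamma_0^k-1)/(\gamma_0-1)$ should be read as $k$). Similarly $r=1$ when $\eta=1/2$.
\end{itemize}
For these I would first handle the generic case $\gamma_0\neq1$, $r\ne1$, $3\le s\le S-2$, and then argue the remaining cases by continuity in the parameters or by a direct short check.

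Finally, since every conditional drift is a convex combination in $p$ and each bound holds for all $p\in\mathcal{P}_0$, the supremum over adversarial choices is still $\le0$. In addition, $V^{(s)}$ is bounded, so integrability is immediate. This gives the supermartingale property.
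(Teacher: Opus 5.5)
Your plan matches the paper's. You compute the one-step increment and check the conditional drift case by case on $R_{n-1}$, using $\mathbb{E}_p[f^C_\pm]\le\rho_0^\pm$ at the end states, $v\ge\gamma_0u$ on $\{2,\ldots,s-1\}$, and the floor $\eta$ elsewhere (with your $u=\mathbb{E}_p[f^\gamma_+]$, $v=\mathbb{E}_p[f^\gamma_-]$, $r=(1-\eta)/\eta$). Your end-state and left-region cases are fine, with one correction. On $\{2,\ldots,s-1\}$ it is the \emph{rightward} increment $r\gamma_0^{t-s}$ that is $\gamma_0$ times the leftward one, so the drift is $r\gamma_0^{t-1-s}(\gamma_0u-v)$.

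The two cases where the floor $f^\gamma_\pm\ge\eta$ is actually needed are argued incorrectly. At $t=s$, a move to $s+1$ leaves $V^{(s)}$ unchanged, so its increment is $0$, not a gain of order $r$. A move to $s-1$ has increment $1-r/\gamma_0$, since $V^{(s)}(s-1)=r/\gamma_0$. The drift is therefore $1-u-(r/\gamma_0)v$. Knowing only that moves occur with probability at least $\eta$ does not bound this: with $u=v=\eta$ it equals $(1-\eta)(1-1/\gamma_0)>0$ whenever $\gamma_0>1$. You must combine $v\ge\gamma_0u$ with $u\ge\eta$ to get $u+(r/\gamma_0)v\ge(1+r)u=u/\eta\ge1$. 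This is the paper's Case V, and it is exactly where the $1/\gamma_0$ in $V^{(s)}(s-1)$ must be matched against the floor on $f^\gamma_+$.

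For $s+1\le t\le S-1$ the directions in your sketch are reversed. A left move \emph{raises} $M^{(s)}$ by $r^{t-s-1}$ (for $t=s+1$ this is the $+1$ from the indicator). A right move lowers it by $r^{t-s}$. So the drift is $r^{t-s-1}(v-ru)$, and you need a lower bound on the right-move probability and an upper bound on the left-move probability. These are $u\ge\eta$ and $v\le1-u$, which give $v-ru\le1-u/\eta\le0$. The bounds you wrote (right with probability at most $1-\eta$, left with probability at least $\eta$) go the wrong way and do not control the drift.

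Both repairs use only facts already on your list, so the proof goes through along your lines once these two cases are redone. As written, though, these cases do not establish the claim.
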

With this claim and noting that $\lim_{n\rightarrow\infty}\frac{1}{n}V(R_n)=0$, we have
\begin{align*}
    \lim_{n\rightarrow\infty}\frac{1}{n} \mathbb{E}_{H=0}^{\pi,\mathcal{A}}[{M}^{(s)}_n]\le{}&0\\
    \implies \lim_{n\rightarrow\infty}\frac{1}{n} \mathbb{E}_{H=0}^{\pi,\mathcal{A}}[\tilde{M}^{(s)}_n]\le{}&0\\
    \implies \lim_{n\rightarrow\infty} \mathbb{P}_{H=0}^{\pi,\mathcal{A}}[R_i\in\{1,S\}]\ge 1-\varepsilon.
\end{align*}
It only remains to prove the Claims~\ref{clm:M_n_submartingale} and~\ref{clm:M_s_supermartingale}.

\begin{proof}[Proof of Claim~\ref{clm:M_n_submartingale}]
    Note that, $M_n-M_{n-1}$, can be written as (see Figure \ref{fig:M_n}),
    \begin{align*}
        M_n-M_{n-1}=
        \begin{cases}
            1 & \text{if } R_{n-1}=R_{n}=1,\\
            -\frac{1-\delta\rho_0^+}{\delta\rho_0^+} & \text{if } R_{n-1}=1;R_{n}=2,\\
            -(R_n-R_{n-1})\frac{\gamma_0^{\min\{R_n,R_{n-1}\}-1}}{\delta\rho_0^+} & \text{if }  R_{n-1}\in\{2,\ldots,S-1\},\\
            \frac{\gamma_0^{S-2}}{\delta\rho_0^+} -\kappa C_0 \gamma_0^{S-2} & \text{if } R_{n-1}=S;R_{n}=S-1,\\
            -\kappa C_0 \gamma_0^{S-2} & \text{if } R_{n-1}=R_{n}=S,\\
            0 & \text{otherwise.}
        \end{cases}
    \end{align*}

    Suppose that at time $n-1$, the machine is in state $R_{n-1}$, and that the adversary has chosen $p_n$ to be the distribution from which $X_n$ is to be drawn. We have three cases to consider:
    
    \noindent{\em (i)} If {$R_{n-1}\in\{2,\ldots,S-1\}$},
        \begin{align*}
            \mathbb{E}[M_n-M_{n-1}|\mathcal{F}_{n-1}]={}& \mathbb{E}_{p_n}[f^\gamma_-]\cdot\left( \frac{\gamma_0^{R_{n-1} -2}}{\delta\rho_0^+}\right) +\mathbb{E}_{p_n}[f^\gamma_+] \cdot\left(-\frac{\gamma_0^{R_{n-1}-1}}{\delta\rho_0^+} \right),\\
            ={}&\mathbb{E}_{p_n}[f^\gamma_+] \frac{\gamma_0^{R_{n-1}-2}} {\delta\rho_0^+} \left(\frac{\mathbb{E}_{p_n}[f^\gamma_-]}{\mathbb{E}_{p_n}[f^\gamma_+]} -\gamma_0\right), \\
            \ge{}&0,
        \end{align*}
    by the definition of $\gamma_0$.\\
    {\em (ii)} If {$R_{n-1}=1$},
        \begin{align*}
            \mathbb{E}[M_n-M_{n-1}|\mathcal{F}_n] ={}& \delta\mathbb{E}_{p_n}[f^C_+] \cdot -\frac{1-\delta \rho_0^+}{\delta\rho_0^+}+\left(1-\delta \mathbb{E}_{p_n}[f_+]\right) \cdot 1,\\
            ={}&-\frac{\mathbb{E}_{p_n}[f^C_+]}{\rho_0^+}+ \delta\mathbb{E}_{p_n}[f^C_+]+1- \delta\mathbb{E}_{p_n}[f^C_+],\\
            ={}& 1-\frac{\mathbb{E}_{p_n}[f^C_+]}{\rho_0^+}\ge0,
        \end{align*}
        by the definition of $\rho_0^+$.\\
    {\em (iii)} If {$R_{n-1}=S$},
        \begin{align*}
            \mathbb{E}[M_n-M_{n-1}|\mathcal{F}_n] ={}& \kappa\delta \mathbb{E}_{p_n}[f^C_-]\cdot\left( \frac{\gamma_0^{S-2}}{\delta\rho_0^+} -\kappa C_0 \gamma_0^{S-2}\right) + (1-\kappa\delta \mathbb{E}_{p_n}[f^C_-]) \cdot\left(-\kappa C_0 \gamma_0^{S-2}\right),\\
            ={}& \gamma_0^{S-2} \left(\frac{\kappa\mathbb{E}_{p_n}[f^C_-]}{\rho_0^+}-\kappa^2\delta\mathbb{E}_{p_n}[f^C_-]C_0-\kappa C_0+\kappa^2\delta\mathbb{E}_{p_n}[f^C_-]C_0 \right),\\
            ={}&\kappa\gamma_0^{S-2} \left(\frac{\mathbb{E}_{p_n}[f^C_-]}{\rho_0^+} -\frac{\min_{p\in\mathcal{P}_0}\mathbb{E}_p[f^C_-]}{\rho_0^+}\right)\ge0,
        \end{align*}
    where we have used $C_0= \frac{\min_{p\in\mathcal{P}_0}\mathbb{E}_p[f^C_-]}{\max_{p\in\mathcal{P}_0}\mathbb{E}_p[f^C_+]} = \frac{\min_{p\in\mathcal{P}_0}\mathbb{E}_p[f^C_-]}{\rho_0^+}$.
        
    Thus, $M_n$ is a submartingale.
\end{proof}

\begin{proof}[Proof of Claim \ref{clm:M_s_supermartingale}]
\begin{figure}
    \centering
    \begin{tikzpicture}[scale=.55, every node/.style={scale=.65}]
        \node[state] (1) at (3,0) {1};
        \node[state] (2) at (6,0) {2};
        \node[state,opacity=0] (3) at (9,0) {};
        \node[state] (s-1) at (12,0) {$s-1$};
        \node[state] (s) at (15,0) {$s$};
        \node[state] (s+1) at (18,0) {$s+1$};
        \node[state,opacity=0] (S-2) at (21,0) {};
        \node[state] (S-1) at (24,0) {$S-1$};
        \node[state] (S) at (27,0) {$S$};

        \draw (1) edge[loop left] node[myblue] {$-\frac{1-\eta}{\eta\gamma_0^{s-1}} \delta \rho_0^+$} (1);
        \draw (1) edge[bend left] node[vermillion] {$\frac{1-\eta}{\eta\gamma_0^{s-1}}\left(1-\delta \rho_0^+ \right)$} (2);
        \draw (2) edge[bend left] node[vermillion,align=left] (e1) {
        $-\frac{1-\eta}{\eta\gamma_0^{s-1}}\left(1-\delta \rho_0^+ \right)$\\ \textcolor{myblue}{$-\frac{1-\eta}{\eta\gamma_0^{s-1}} \delta \rho_0^+$}
        } (1);
        
        \draw (2) edge[bend left] node[vermillion,align=left] {$\frac{1-\eta}{\eta\gamma_0^{s-2}}$\\ } (3);
        \draw (3) edge[bend left] node[vermillion,align=left] { \\   $\hspace{.2cm}-\frac{1-\eta}{\eta\gamma_0^{s-2}}$} (2);

        \draw (3) edge[bend left] node[vermillion] {$\frac{1-\eta}{\eta\gamma_0^2}$} (s-1);
        \draw (s-1) edge[bend left] node[vermillion] {$-\frac{1-\eta}{\eta\gamma_0^2}$} (3);
        
        \draw (s-1) edge[bend left] node[vermillion] {$\frac{1-\eta}{\eta\gamma_0}-1$\textcolor{myblue}{+1}} (s);
        \draw (s) edge[bend left] node[vermillion] {$1-\frac{1-\eta}{\eta\gamma_0}$} (s-1);
        \draw (s) edge[loop below] node[myblue] {$+1$} (s);
        
        \draw (s) edge[bend left] node[vermillion] {$0$} (s+1);
        \draw (s+1) edge[bend left] node[vermillion] {0\textcolor{myblue}{$+1$}} (s);
        
        \draw (s+1) edge[bend left] node[vermillion] {$-\frac{1-\eta}{\eta}$} (S-2);
        \draw (S-2) edge[bend left] node[vermillion] {$\frac{1-\eta}{\eta}$} (s+1);
        
        \draw (S-2) edge[bend left] node[vermillion] {$-\left(\frac{1-\eta}{\eta}\right)^{s-3}$} (S-1);
        \draw (S-1) edge[bend left] node[vermillion] {$\left(\frac{1-\eta}{\eta}\right)^{s-3}$} (S-2);
        
        \draw (S-1) edge[bend left] node[vermillion,align=right] {$-\left(\frac{1-\eta}{\eta}\right)^{s-2}\hspace{-.4cm}(1-\kappa\delta\rho_0^-)$\\ \textcolor{myblue}{$-\left(\frac{1-\eta}{\eta}\right)^{s-2} \hspace{-.4cm}\kappa\delta\rho_0^-$}} (S);
        \draw (S) edge[bend left] node[vermillion,align=right] { \\ \\$\left(\frac{1-\eta}{\eta}\right)^{s-2}\hspace{-.4cm}(1-\kappa\delta\rho_0^-)$} (S-1);
        \draw (S) edge[loop right] node[myblue] {$-\left(\frac{1-\eta}{\eta}\right)^{s-2} \hspace{-.4cm}\kappa\delta\rho_0^-$} (S);
    \end{tikzpicture}
    \caption{This figure shows the increment $M^{(s)}_n-M^{(s)}_{n-1} = (\tilde{M}^{(s)}_n-\tilde{M}^{(s)}_{n-1}) + (-V^{(s)}(R_n)+V^{(s)}(R_{n-1}))$ as a function of the transition $R_{n-1}$ to $R_n$. The differences $-V^{(s)}(R_n)+V^{(s)}(R_{n-1})$ are shown in \textcolor{vermillion}{orange} and the increments $\tilde{M}^{(s)}_n-\tilde{M}^{(s)}_{n-1}$ are shown in \textcolor{myblue}{blue}.}
    \label{fig:M_n^s}
\end{figure}

    
    The increment $M^{(s)}_n-M^{(s)}_{n-1}=\mathbf{1}_{R_i=s}-\delta\left(\frac{1-\eta}{\eta\gamma_0^{s-1}} \rho_0^+\mathbf{1}_{R_i=1} + \kappa\left(\frac{1-\eta}{\eta}\right)^{S-1-s}\rho_0^-\mathbf{1}_{R_i=S}\right)-V(R_n)+V(R_{n-1})$ can be written as
    \begin{align*}
    M^{(s)}_n-M^{(s)}_{n-1}=\begin{cases}
        -\frac{1-\eta}{\eta}\gamma_0^{1-s} \delta \rho_0^+ & \text{if } R_{n-1}=R_n=1,\\
        \frac{1-\eta}{\eta}\gamma_0^{1-s}\left(1-\delta \rho_0^+ \right) & \text{if } R_{n-1}=1;R_n=2,\\
        (R_n-R_{n-1})\frac{1-\eta}{\eta}\gamma_0^{\min\{R_{n-1},R_n\}-s} & \text{if } R_{n-1}\in\{2,\ldots,s-1\}, \\
        1-\frac{1-\eta}{\eta\gamma_0} & \text{if } R_{n-1}=s;R_n=s-1,\\
        1 & \text{if } R_{n-1}=R_n=s,\\
        0 & \text{if } R_{n-1}=s;R_n=s+1,\\
        (R_{n-1}-R_n)\left(\frac{1-\eta}{\eta}\right)^{\min\{R_{n-1},R_n\}-s} & \text{if } R_{n-1}\in \{s+1,\ldots,S-1\},\\
        \left( \frac{1-\eta}{\eta} \right)^{S-1-s}\left(1-\kappa \delta \rho_0^- \right) & \text{if } R_{n-1}=S; R_n=S-1,\\
        -\left( \frac{1-\eta}{\eta} \right)^{S-1-s} \kappa\delta\rho_0^- & \text{if } R_{n-1}=R_n=S.
    \end{cases}    
    \end{align*}
    Now, we want to upper bound $\mathbb{E}[M_n-M_{n-1} |\mathcal{F}_{n-1}]$. Suppose that as time $n-1$, conditioned on the past, the adversary has chosen $p_n\in\mathcal{P}_0$ to be the distribution according to which $X_n$ is to be draw. We split into cases according to $R_{n-1}$:
    \paragraph*{Case I: $R_{n-1}=1$}
    \begin{align*}
        \mathbb{E}[M^{(s)}_n-M^{(s)}_{n-1}|\mathcal{F}_{n-1}] ={}& -\frac{1-\eta}{\eta}\gamma_0^{1-s} \rho_0^+\cdot(1-\delta\mathbb{E}_{p_n}[f^C_+]) + \frac{1-\eta}{\eta}\gamma_0^{1-s}\left(1-\rho_0^+ \right)\cdot\delta\mathbb{E}_{p_n}[f^C_+],\\
        ={}& \frac{1-\eta}{\eta}\gamma_0^{1-s}\left(-\rho_0^+(1-\delta\mathbb{E}_{p_n}[f^C_+]) + \left(1-\rho_0^+ \right) \delta\mathbb{E}_{p_n}[f^C_+]\right),\\
        ={}& \frac{1-\eta}{\eta}\gamma_0^{1-s}\delta\left(-\max_{p\in\mathcal{P}_0} \mathbb{E}_p[f^C_+] + \mathbb{E}_{p_n}[f^C_+] \right),\\
        \le{}&0.
    \end{align*}
    \paragraph*{Case II: $R_{n-1}=S$}
    \begin{align*}
        \mathbb{E}[M^{(s)}_n-M^{(s)}_{n-1}|\mathcal{F}_{n-1}] ={}& \left( \frac{1-\eta}{\eta} \right)^{S-1-s} \left(1-\rho_0^-\right)\cdot\kappa\delta \mathbb{E}_{p_n}[f^C_-]- \left( \frac{1-\eta}{\eta} \right)^{S-1-s} \rho_0^-\cdot \left(1-\kappa\delta\mathbb{E}_{p_n}[f^C_-]\right),\\
        ={}& \left( \frac{1-\eta}{\eta} \right)^{S-1-s} \left( \left(1-\rho_0^-\right)\kappa\delta \mathbb{E}_{p_n}[f^C_-] - \rho_0^- \left(1-\kappa\delta\mathbb{E}_{p_n}[f^C_-] \right)\right),\\
        ={}& \left( \frac{1-\eta}{\eta} \right)^{S-1-s} \kappa\delta \left( \mathbb{E}_{p_n}[f^C_-] - \max_{p\in\mathcal{P}_0} \mathbb{E}_p[f^C_-] \right),\\
        \le{}&0.
    \end{align*}
    \paragraph*{Case III: $R_{n-1}\in\{2,\ldots,s-1\}$}
    \begin{align*}
        \mathbb{E}[M^{(s)}_n-M^{(s)}_{n-1}|\mathcal{F}_{n-1}] ={}& -\frac{1-\eta}{\eta}\gamma_0^{R_{n-1}-s-1}\cdot \mathbb{E}_{p_n}[f^\gamma_-] + \frac{1-\eta}{\eta} \gamma_0^{R_{n-1}-s}\cdot \mathbb{E}_{p_n}[f^\gamma_+], \\
        ={}& \frac{1-\eta}{\eta}\gamma_0^{R_{n-1}-s-1} \mathbb{E}_{p_n}[f^\gamma_+] \left(-\frac{\mathbb{E}_{p_n}[f^\gamma_-]}{\mathbb{E}_{p_n}[f^\gamma_+]}+\gamma_0\right),\\
        \le{}&0,
    \end{align*}
    by the definition of $\gamma_0$.
    \paragraph*{Case IV: $R_{n-1}\in\{s+1,\ldots,S-1\}$}
    Note that $\mathbb{E}_{p_n}[f^\gamma_+]\ge\eta$ due to $f_+\ge\eta$, and so $\mathbb{E}_{p_n}[f^\gamma_-]\le 1-\eta$.
    Thus,
    \begin{align*}
        \mathbb{E}[M^{(s)}_n-M^{(s)}_{n-1}|\mathcal{F}_{n-1}] ={}& \left(\frac{1-\eta}{\eta}\right)^{R_{n-1}-s-1} \cdot\mathbb{E}_{p_n}[f^\gamma_-]-\left(\frac{1-\eta}{\eta} \right)^{R_{n-1}-s}\cdot\mathbb{E}_{p_n}[f^\gamma_+],\\
        ={}& \left(\frac{1-\eta}{\eta}\right)^{R_{n-1}-s-1} \left(\mathbb{E}_{p_n}[f^\gamma_-] - \frac{1-\eta}{\eta} \mathbb{E}_{p_n}[f^\gamma_+]\right),\\
        \le{}&0.
    \end{align*}
    \paragraph*{Case V: $R_{n-1}=s$}
    Note that $\mathbb{E}_{p_n}[f^\gamma_+]\ge\eta$ (since we assumed $f_-^\gamma\ge\eta$), and so $\mathbb{E}_{p_n}[f^\gamma_-]\ge \gamma_0\mathbb{E}_{p_n}[f^\gamma_+]\ge \gamma_0\eta$.
    Thus,
    \begin{align*}
        \mathbb{E}[M^{(s)}_n-M^{(s)}_{n-1}|\mathcal{F}_{n-1}] ={}& \left(1-\frac{1-\eta}{\eta\gamma_0}\right)\cdot \mathbb{E}_{p_n}[f^\gamma_-] + 0\cdot\mathbb{E}_{p_n}[f^\gamma_+] + 1\cdot\left(1-\mathbb{E}_{p_n}[f^\gamma_-]-\mathbb{E}_{p_n}[f^\gamma_+]\right),\\
        \le{}& -\frac{1-\eta-\eta\gamma_0}{\eta\gamma_0} \cdot\eta\gamma_0+1\cdot(1-\eta-\eta\gamma_0),\\
        ={}&0.
    \end{align*}
\end{proof}

\section{Examples}
\subsection{Optimal $f_+$ may not be indicator}
\label{sec:example_nonindicator_extra}
Let $\mathcal{X}=\{a,b,c\}$,
$\mathcal{P}_0=\left\{p\in\Delta(\mathcal{X}):p(a)+2p(b)=\frac{1}{2}\right\} = \left\{ \left(\frac{\lambda}{2},\frac{1-\lambda}{4},\frac{3-\lambda}{4}\right):\lambda\in[0,1]\right\}$,
$\mathcal{P}_1=\left\{\left(\frac{1}{3},\frac{1}{3},\frac{1}{3}\right)\right\}$.

\begin{clm}
    \label{clm:eg_achievability}
    For this example $\gamma=4$, which occurs at $f_+=\left(\frac{1}{3},1,0\right), f_-=\left(0,0,1\right)$.
\end{clm}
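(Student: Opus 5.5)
Two matching bounds will give $\gamma=4$: an explicit feasible pair attaining $4$ (so $\gamma\ge 4$), and the min--sup bound \eqref{eq:gamma-le-gammaHC} (so $\gamma\le 4$).

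\textbf{Lower bound.} Set $f_+=(\frac13,1,0)$, $f_-=(0,0,1)$. These satisfy $f_++f_-\le 1$ pointwise. Write $p_\lambda=(\frac{\lambda}{2},\frac{1-\lambda}{4},\frac{3-\lambda}{4})$ and $q=(\frac13,\frac13,\frac13)$. Then
\begin{align*}
\mathbb{E}_{p_\lambda}[f_+]=\frac{\lambda}{6}+\frac{1-\lambda}{4}=\frac{3-\lambda}{12},\qquad \mathbb{E}_{p_\lambda}[f_-]=\frac{3-\lambda}{4},
\end{align*}
so $\mathbb{E}_{p_\lambda}[f_-]/\mathbb{E}_{p_\lambda}[f_+]=3$ for every $\lambda$. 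This is the point of weighting $a$ by $\frac13$: the ratio is constant along $\mathcal{P}_0$. We also have $\mathbb{E}_q[f_+]/\mathbb{E}_q[f_-]=\frac{4/9}{1/3}=\frac43$. The objective in \eqref{eq:gamma_def} is therefore $4$ for every $p\in\mathcal{P}_0$, and hence $\gamma\ge 4$.

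\textbf{Upper bound.} By \eqref{eq:gamma-le-gammaHC}, $\gamma\le\min_{p\in\mathcal{P}_0}\gamma_{\mathrm{HC}}(p,q)$, so it suffices to find one $\lambda$ with $\gamma_{\mathrm{HC}}(p_\lambda,q)=4$. The likelihood ratios are $q/p_\lambda=\big(\frac{2}{3\lambda},\frac{4}{3(1-\lambda)},\frac{4}{3(3-\lambda)}\big)$. The maximum is minimized where the first two entries are equal, i.e.\ $\lambda=\frac13$, which gives $\max=2$. At $\lambda=\frac13$ we have $\min q/p=\frac{4}{3\cdot 8/3}=\frac12$, so $\gamma_{\mathrm{HC}}(p_{1/3},q)=2\cdot 2=4$. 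Hence $\gamma\le 4$, and combined with the lower bound, $\gamma=4$. Lemma~\ref{lm:minimax} is not needed here, since the inequality direction of \eqref{eq:gamma-le-gammaHC} suffices.

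\textbf{Uniqueness and the indicator comparison.} The main obstacle is the uniqueness of the maximiser up to scaling, which the main text asserts. The approach is as follows. Any optimal $(f_+,f_-)$ must make $p_{1/3}$ a minimizer, and at $(p_{1/3},q)$ it must attain the ratio bounds from Remark~\ref{rem:gamma_relation_to_pvp} with equality:
\begin{itemize}
\item $\mathbb{E}_q[f_+]/\mathbb{E}_{p_{1/3}}[f_+]=2$ forces $\mathrm{supp}(f_+)\subseteq\{a,b\}$, the argmax of $q/p_{1/3}$;
\item $\mathbb{E}_{p_{1/3}}[f_-]/\mathbb{E}_q[f_-]=2$ forces $\mathrm{supp}(f_-)\subseteq\{c\}$.
\end{itemize}
Writing $f_+=(\alpha,\beta,0)$, the ratio $\mathbb{E}_{p_\lambda}[f_-]/\mathbb{E}_{p_\lambda}[f_+]=\frac{3-\lambda}{2\alpha\lambda+\beta(1-\lambda)}$ must be minimized over $\lambda\in[0,1]$ at $\lambda=\frac13$. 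It is a linear-fractional function of $\lambda$, hence monotone, and $\lambda=\frac13$ is interior. So the ratio must be constant in $\lambda$, which forces $\alpha=\beta/3$. I expect the bookkeeping with the $\frac00$ convention in degenerate cases to be the fiddly part.

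For the comparison with indicators, it then suffices to enumerate the pairs of disjoint subsets $\mathcal{X}_+,\mathcal{X}_-$ and check that each value of $\min_\lambda$ of the objective is below $4$. For example, $f_+=\mathbf{1}_{\{a,b\}}$, $f_-=\mathbf{1}_{\{c\}}$ gives $\frac{2}{1}\cdot\min_\lambda\frac{3-\lambda}{1+\lambda}=2\cdot 1=2<4$.
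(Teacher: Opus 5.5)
Your proof is correct, and it takes a different and considerably cleaner route than the paper's.

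\textbf{The paper's route.} The paper normalises $f_\pm$ to the simplex and reduces the minimum over $\mathcal{P}_0$ to its two extreme points $(0,\frac14,\frac34)$ and $(\frac12,0,\frac12)$. It then exhaustively enumerates corner points and branch-boundary points of the resulting piecewise linear-fractional max--min problem. Both $\gamma\le 4$ and uniqueness come out of that enumeration.

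\textbf{Your route.} You exhibit a certificate for each direction. Your $f_+$ makes $\mathbb{E}_p[f_-]/\mathbb{E}_p[f_+]\equiv 3$ on all of $\mathcal{P}_0$, which gives $\gamma\ge 4$. The single point $p_{1/3}=(\frac16,\frac16,\frac23)$, with $\gamma_{\mathrm{HC}}(p_{1/3},q)=4$, gives $\gamma\le 4$ through the weak-duality inequality \eqref{eq:gamma-le-gammaHC}. This is shorter and avoids the case bookkeeping. It also explains the value: $\bigl((\frac13,1,0),(0,0,1)\bigr)$ and $p_{1/3}$ form a saddle point, which is the equality of Lemma~\ref{lm:minimax} in this instance.

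\textbf{Uniqueness.} Your complementary-slackness argument is also sound. Because $p_{1/3}$ and $q$ have full support, any maximiser has $f_\pm\neq 0$. It must be tight at $(p_{1/3},q)$, which forces $\mathrm{supp}(f_+)\subseteq\{a,b\}$ and $\mathrm{supp}(f_-)=\{c\}$. Minimality of a linear-fractional function at an interior point then forces constancy, i.e.\ $\alpha=\beta/3$.

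The degenerate cases you were worried about are not a real obstacle. If $\alpha=0$ or $\beta=0$, the ratio blows up at one endpoint of $[0,1]$ and is strictly monotone, so $\lambda=\frac13$ cannot be a minimiser.

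\textbf{Indicator pairs.} You do not need to enumerate them. Uniqueness up to separate scaling already shows that every indicator pair has value strictly below $4$, since no $0/1$ vector is proportional to $(\frac13,1,0)$. As there are finitely many such pairs, their maximum is strictly below $4$.
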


\begin{proof}[Proof of Claim \ref{clm:eg_achievability}]
    Let $p_0=\left(0,\frac{1}{4},\frac{3}{4}\right)$, and $p_1=\left(\frac{1}{2},0,\frac{1}{2} \right)$.
    Then, $\mathcal{P}_0$ is the convex hull of $\{p_0,p_1\}$.
    \begin{align*}
        \gamma={}& \sup_{f_+,f_-:\mathcal{X}\rightarrow[0,1]} \gamma(f_+,f_-),\\
        \text{where } \gamma(f_+,f_-)={}& \min_{\substack{p\in\mathcal{P}_0 \\ q\in\mathcal{P}_1}}\frac{\mathbb{E}_q[f_+]\mathbb{E}_p[f_-]}{\mathbb{E}_q[f_-]\mathbb{E}_p[f_+]}.
    \end{align*}
    Note that,
    \begin{align*}
        \gamma(f_+,f_-)={}&\min_{\substack{p\in\mathcal{P}_0 \\ q\in\mathcal{P}_1}} \frac{\mathbb{E}_q[f_+]\mathbb{E}_p[f_-]}{\mathbb{E}_q[f_-]\mathbb{E}_p[f_+]},\\
        ={}&\frac{\sum_{x\in\mathcal{X}}f_+(x)}{\sum_{x\in\mathcal{X}}f_-(x)} \min_{\lambda\in[0,1]}\frac{\lambda\mathbb{E}_{p_1}[f_-] +(1-\lambda) \mathbb{E}_{p_0}[f_-]}{\lambda\mathbb{E}_{p_1}[f_+] +(1-\lambda) \mathbb{E}_{p_0}[f_+]},\\
        ={}& \frac{\sum_{x\in\mathcal{X}}f_+(x)}{\sum_{x\in\mathcal{X}}f_-(x)} \min\left\{ \frac{ \mathbb{E}_{p_0}[f_-]}{ \mathbb{E}_{p_0}[f_+]} \,,\,\frac{\mathbb{E}_{p_1}[f_-] }{\mathbb{E}_{p_1}[f_+] } \right\}.
    \end{align*}
    Since $\gamma(f_+,f_-)$ is invariant under scalings of $f_+$ and those of $f_-$, we set $\sum_{x\in\mathcal{X}}f_+(x)=\sum_{x\in\mathcal{X}}f_-(x)=1$ without loss of generality.
    Thus,
    \begin{align*}
        \gamma={}&\max_{f_+,f_-\in\Delta(\mathcal{X})} \min\left\{ \frac{ \mathbb{E}_{p_0}[f_-]}{ \mathbb{E}_{p_0}[f_+]} \,,\,\frac{\mathbb{E}_{p_1}[f_-] }{\mathbb{E}_{p_1}[f_+] } \right\},\\
        ={}& \max_{f_+\in\Delta(\mathcal{X})}\left( \max_{f_-\in\Delta(\mathcal{X})} \min\left\{ \frac{ \mathbb{E}_{p_0}[f_-]}{ \mathbb{E}_{p_0}[f_+]} \,,\,\frac{\mathbb{E}_{p_1}[f_-] }{\mathbb{E}_{p_1}[f_+] } \right\}\right),\\
        ={}& \max_{f_+\in\Delta(\mathcal{X})}\left( \max_{f_-\in\Delta(\mathcal{X})} \min\left\{ \frac{ f_-(b)+3f_-(c)}{ f_+(b)+3f_+(c)} \,,\,\frac{f_-(a)+f_-(c) }{f_+(a)+f_+(c) } \right\}\right).
    \end{align*}
    The inner optimisation problem has a piece-wise linear objective, so the maximum occurs at a corner point or at a boundary point between the two branches.
    Let us look at the corner points first.
    $f_=(1,0,0)$ and $f_-=(0,1,0)$ both result in an objective function value of 0.
    The last candidate corner point is $f_-=(0,0,1)$ which gives
    \begin{align*}
        \gamma\ge{}& \max_{f_+\in\Delta(\mathcal{X})} \min\left\{\frac{3}{f_+(b)+3f_+(c)} \,,\, \frac{1}{f_+(a)+f_+(c)}\right\},\\
        ={}&  \left( \min_{f_+\in\Delta(\mathcal{X})}\max\left\{\frac{f_+(b)}{3}+f_+(c) \,,\, f_+(a)+f_+(c)\right\}\right)^{-1}.
    \end{align*}
    Now the outer optimisation of a piece-wise linear objective, and so the optimum occurs at a corner point or a boundary point.
    $f_+=(1,0,0)$ gives an objective value of 1; $f_+=(0,1,0)$ gives an objective value of 3; $f_+=(0,0,1)$ gives an objective value of 1.
    
    On the boundary we have $f_+(b)=3f_+(a)$.
    Thus, the boundary is $\left\{\left(\beta,3\beta,1-4\beta\right): \beta\in\left[0,\frac{1}{4}\right]\right\}$.
    Its end points are $(0,0,1)$, which we have considered before, and $\left(\frac{1}{4},\frac{3}{4},0\right)$, which gives an objective value of 4.

    So far, the best candidate is $f_+=\left(\frac{1}{4},\frac{3}{4},0\right),f_-=(0,0,1)$ which certifies that $\gamma\ge4$.
    The only remaining candidates are those when $\frac{f_-(b)+3f_-(c)}{f_+(b)+3f_+(c)} = \frac{f_-(a)+f_-(c)}{f_+(a)+f_+(c)}$.
    In this case,
    \begin{align*}
        \frac{f_-(b)+3f_-(c)}{f_+(b)+3f_+(c)} ={}& \frac{f_-(a)+f_-(c)}{f_+(a)+f_+(c)},\\
        \implies\frac{1-f_-(a)+2f_-(c)}{1-f_+(a)+2f_+(c)} ={}& \frac{f_-(a)+f_-(c)}{f_+(a)+f_+(c)},\\
        \implies\frac{1-f_-(a)}{1-f_+(a)+2f_+(c)}- \frac{f_-(a)}{f_+(a)+f_+(c)} ={}& \frac{f_-(c)}{f_+(a)+f_+(c)}-\frac{2f_-(c)}{1-f_+(a)+2f_+(c)},\\
        \implies f_+(a)+f_+(c)-f_-(a)\left(1+3f_+(c)\right)={}& f_-(c)\left( 1-3f_+(a) \right),\\
        \implies f_-(c)={}& \frac{f_+(a)+f_+(c)}{1-3f_+(a)}-f_-(a) \frac{1+3f_+(c)}{1-3f_+(a)},\\
        \implies f_-(a)+f_-(c)={}& \frac{f_+(a)+f_+(c)}{1-3f_+(a)}-3f_-(a) \frac{f_+(a)+f_+(c)}{1-3f_+(a)},\\
        ={}& (f_+(a)+f_+(c))\frac{1-3f_-(a)}{1-3f_+(a)},
    \end{align*}
    which gives us the following candidate for $\gamma$
    \begin{align*}
        &\max_{f_-(a)}\left(\max_{f_+\in\Delta(\mathcal{X})}\frac{1-3f_-(a)}{1-3f_+(a)} \right),\\
        \text{such that}&\\
        &f_-(a)\ge0,\\
        &f_-(a)+f_-(c)\le1,\\
        &f_-(c)\ge0,\\
        &f_-(c)=\frac{f_+(a)+f_+(c)}{1-3f_+(a)}-f_-(a) \frac{1+3f_+(c)}{1-3f_+(a)}.
    \end{align*}
    Eliminating $f_-(c)$ gives
    \begin{align*}
        &\max_{f_+\in\Delta(\mathcal{X})}\max_{f_-(a)}\frac{1-3f_-(a)}{1-3f_+(a)}, \\
        \text{such that}&\\
        &f_-(a)\ge0,\\
        &(f_+(a)+f_+(c))\frac{1-3f_-(a)}{1-3f_+(a)}\le1,\\
        &\frac{f_+(a)+f_+(c)}{1-3f_+(a)}-f_-(a) \frac{1+3f_+(c)}{1-3f_+(a)}\ge0,\\
        &f_-(c)=\frac{f_+(a)+f_+(c)}{1-3f_+(a)}-f_-(a) \frac{1+3f_+(c)}{1-3f_+(a)}.
    \end{align*}
    which simplifies to
    \begin{align*}
        &\max_{f_-(a)} \left(\max_{f_+\in\Delta(\mathcal{X})}\frac{1-3f_-(a)}{1-3f_+(a)} \right),\\
        \text{such that}&\\
        &f_-(a)\ge0,\\
        &f_-(a)\ge\frac{4f_+(a)+f_+(c)-1}{3(f_+(a)+f_+(c))}, \\
        &f_-(a) \le\frac{f_+(a)+f_+(c)}{1+3f_+(c)},\\
        &f_-(c)=\frac{f_+(a)+f_+(c)}{1-3f_+(a)}-f_-(a) \frac{1+3f_+(c)}{1-3f_+(a)}.
    \end{align*}
    Therefore,
    \begin{align*}
        \gamma\ge{}& \max_{f_+\in\Delta(\mathcal{X})} \frac{1-\left(\frac{4f_+(a)+f_+(c)-1}{f_+(a)+f_+(c)} \right)_+}{1-3f_+(a)},\\
        ={}&\max_{f_+\in\Delta(\mathcal{X})}\min\left\{\frac{1}{1-3f_+(a)}\,,\,\frac{1}{f_+(a)+f_+(c)}\right\}.
    \end{align*}
    Again, the maximiser is a corner point or a boundary point.
    At $f_+=(1,0,0)$, the objective value is 1;
    at $f_+=(0,1,0)$, the objective value is 1;
    at $f_+=(0,0,1)$, the objective value is 1.
    On the boundary we have $f_+(c)=1-4f_+(a)$.
    Then the maximisation problem becomes
    \begin{align*}
        &\max_{f_+(a)}\frac{1}{1-3f_+(a)},\\
        \text{such that}&\\
        &f_+(a)\ge0,\\
        &f_+(a)\le\frac{1}{4}.
    \end{align*}
    (The condition $f_+(a)+f_+(c)\le1$ reduces to $f_+(a)\ge0$.)
    This gives the candidate $f_+=\left(\frac{1}{4},\frac{3}{4},0\right),f_-=\left(0,0,1\right)$ with $\gamma=4$.
    This is the same solution we found in the $f_-=(0,0,1)$ branch.
    We have now exhausted all possible corner/boundary points.
    Therefore, $f_+=\left(\frac{1}{4},\frac{3}{4},0\right),f_-=\left(0,0,1\right)$ with $\gamma=4$ is the unique maximum.
    Since the problem is invariant to rescaling of $f_+$, we can also set the optimal $f_+=\left(\frac{1}{3},1,0\right)$.
\end{proof}
\subsection{Weak vs Strong LFD pairs}
\label{sec:example_weak_vs_strong_extra}
Let $\mathcal{X}=[5]$, $\mathcal{P}_1=\{q\}=\left\{\left(\frac{9}{50}, \frac{4}{15}, \frac{4}{15}, \frac{4}{15}, \frac{1}{50} \right)\right\}$,
$\mathcal{P}_0=\{p_\lambda:\lambda\in[0,1]\}=\left\{\left(\frac{1}{50}, \frac{2\lambda}{5},\frac{1-\lambda}{5},\frac{3-\lambda}{5},\frac{9}{50} \right):\lambda\in[0,1]\right\}$\\
$ = \mathrm{conv}\left(\left\{\left(\frac{1}{50}, 0, \frac{1}{5}, \frac{3}{5}, \frac{9}{50} \right)\,,\,\left(\frac{1}{50}, \frac{2}{5}, 0, \frac{2}{5}, \frac{9}{50} \right)\right\}\right)$.
We now show that, for this example, a weak LFD pair exists, but no (strong) LFD pair exists.

First, we show that $(p_\lambda,q)$ is a weak LFD pair for every $\lambda\in\left(\frac{2}{27},\frac{23}{27}\right)$.
To see this, note that when $\lambda\in\left(\frac{2}{27},\frac{23}{27}\right)$,
\begin{align*}
    \max_x\frac{q(x)}{p_\lambda(x)}={}&9, & \mathcal{X}_+=\argmax_x\frac{q(x)}{p_\lambda(x)}={}&\{1\},\\
    \min_x\frac{q(x)}{p_\lambda(x)}={}&\frac{1}{9}, & \mathcal{X}_-=\argmin_x\frac{q(x)}{p_\lambda(x)}={}&\{5\}.
\end{align*}
It is easy to see that \eqref{eq:weak_LFD_def_0} are satisfied for this $(\mathcal{X}_+,\mathcal{X}_-)$ (and \eqref{eq:weak_LFD_def_1} are satisfied trivially).

We see that the weak LFD pair need not be unique, while a strong LFD pair, when it exists, is unique.

On the other hand, no pair $(p_\lambda,q)$ forms a strong LFD pair:
\paragraph{Case I: $\lambda=1$}
In this case $p_\lambda(2)=0$, so $\frac{q(2)}{p_\lambda(2)}=\infty$.
Consider equation \eqref{eq:LFD_pair_def_0} with $\eta=10$. Note that $\{x:q(x)/p_\lambda(x)>10\}=\{2\}$.
But clearly, there are $p\in\mathcal{P}_0$ with $p(2)>p_\lambda(2)=0$, so \eqref{eq:LFD_pair_def_0} is unsatisfied for $\eta=10$
\paragraph{Case II:$\lambda<1$}
In this case consider equation \eqref{eq:LFD_pair_def_0} with $\eta=\frac{2}{3}$ 
Now, $\{x:q(x)/p_\lambda(x)>2/3\}=\{1,2\}$, and $p_\lambda(\{1,2\})=\frac{1}{50}+\frac{2\lambda}{5}$. But, since $\lambda<1$, there is another $\lambda'\in(\lambda,1)$ with $p_{\lambda'}(\{1,2\})=\frac{1}{50}+\frac{2\lambda'}{5}>p_\lambda(\{1,2\})$.
Hence, \eqref{eq:LFD_pair_def_0} is unsatisfied for $\eta=\frac{2}{3}$.
\section{Missing Proofs from Section \ref{sec:converse}}
\label{sec:converse_extra}
\begin{proof}[Proof of Lemma \ref{lm:minimax}]
    The key ingredient is Sion's minimax theorem \cite{sion1958general}.
    \begin{thm}
        Let $\mathbf{X}$ be a convex subset of a topological vector space, and let $\mathbf{Y}$ be a convex, compact subset of a topological vector space.
        If $f:\mathbf{X}\times\mathbf{Y} \rightarrow \mathbb{R}$ is a real-valued function, satisfying
        \begin{itemize}
            \item $f(\cdot,y)$ is upper semi-continuous and quasi-concave on $\mathbf{X}$ for every fixed $y\in\mathbf{Y}$, and
            \item $f(x,\cdot)$ is lower semi-continuous and quasi-convex on $\mathbf{Y}$ for every fixed $x\in\mathbf{X}$,
        \end{itemize}
        then we have
        \begin{align}
            \sup_{x\in\mathbf{X}}\min_{y\in\mathbf{Y}}f(x,y) =\min_{y\in\mathbf{Y}}\sup_{x\in\mathbf{X}}f(x,y) \label{eq:sion_minimax}
        \end{align}
    \end{thm}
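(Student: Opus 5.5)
The first step is the trivial inequality. Since $\min_{y}f(x,y)\le f(x,y')$ for every $y'\in\mathbf{Y}$, taking $\sup_x$ and then $\min_{y'}$ gives $\sup_x\min_y f\le\min_y\sup_x f$. The content of the theorem is the reverse inequality, and I would follow Komiya's elementary proof, which needs only convexity, the two semicontinuity assumptions, and connectedness of line segments. One remark first: the minima over $\mathbf{Y}$ are attained, because a lower semicontinuous function on a compact set attains its minimum, and $y\mapsto\sup_x f(x,y)$ is lower semicontinuous as a supremum of such functions. Suppose, for contradiction, that there is a real $a$ with $\sup_x\min_y f(x,y)<a<\min_y\sup_x f(x,y)$.

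\emph{Step 1 (compactness reduction).} For each $x$, the set $U_x:=\{y:f(x,y)>a\}$ is open, because $f(x,\cdot)$ is lower semicontinuous. Since $\sup_x f(x,y)>a$ for every $y$, the sets $U_x$ cover $\mathbf{Y}$. By compactness there are finitely many points $x_1,\dots,x_n$ with $\min_y\max_i f(x_i,y)>a$. It therefore suffices to prove the following claim.

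\emph{Claim.} If $\min_y\max_{i\le n}f(x_i,y)>a$, then there is $x_0\in\mathbf{X}$ with $\min_y f(x_0,y)>a$.

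Applying the Claim to $x_1,\dots,x_n$ gives $\sup_x\min_y f>a$, which contradicts the choice of $a$.

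\emph{Step 2 (the case $n=2$, the main obstacle).} Suppose $\min_y\max\{f(x_1,y),f(x_2,y)\}>a$ but $\min_y f(z,y)\le a$ for every $z$ on the segment $[x_1,x_2]$. Choose $b$ with $a<b<\min_y\max\{f(x_1,y),f(x_2,y)\}$. For each $z$, let $L_z:=\{y:f(z,y)\le b\}$. This set is closed (lower semicontinuity in $y$), convex (quasi-convexity in $y$), and nonempty, since $\min_y f(z,y)\le a<b$. The sets $L_{x_1}$ and $L_{x_2}$ are disjoint by the choice of $b$.

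Quasi-concavity in $x$ gives, for $z\in[x_1,x_2]$, $f(z,y)\ge\min\{f(x_1,y),f(x_2,y)\}$, hence $L_z\subseteq L_{x_1}\cup L_{x_2}$. Because $L_z$ is convex, hence connected, and the two sets are disjoint and closed, $L_z$ lies entirely inside one of them. This splits the segment into $I=\{z:L_z\subseteq L_{x_1}\}$ and $J=\{z:L_z\subseteq L_{x_2}\}$, which are disjoint, nonempty (they contain $x_1$ and $x_2$ respectively), and cover $[x_1,x_2]$.

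The delicate point is showing that $I$ is closed; by symmetry $J$ is too. Take $z_k\to z$ with $z_k\in I$, and pick $y\in L_z$ with $f(z,y)\le a<b$, which is possible because the minimum is at most $a$. Upper semicontinuity of $f(\cdot,y)$ forces $f(z_k,y)<b$ for large $k$. Thus $y\in L_{z_k}\subseteq L_{x_1}$, so $L_z$ meets $L_{x_1}$ and hence $z\in I$. Using the strict gap between $a$ and $b$ is exactly what makes this argument work. We have now partitioned the connected segment into two disjoint nonempty closed sets, a contradiction.

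\emph{Step 3 (induction on $n$).} Assume the Claim for $n-1$ points, for every compact convex $\mathbf{Y}$. Suppose $\min_y\max_{i\le n}f(x_i,y)>a$. Let $\mathbf{Y}'=\{y:f(x_n,y)\le a\}$, which is compact and convex. If $\mathbf{Y}'$ is empty, take $x_0=x_n$. Otherwise $\min_{y\in\mathbf{Y}'}\max_{i<n}f(x_i,y)>a$, so there is $x'$ with $\min_{y\in\mathbf{Y}'}f(x',y)>a$. Then $\min_{y\in\mathbf{Y}}\max\{f(x',y),f(x_n,y)\}>a$: off $\mathbf{Y}'$ the term with $x_n$ exceeds $a$, and on $\mathbf{Y}'$ the term with $x'$ does. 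Step 2 applied to the pair $x',x_n$ now yields $x_0$. This proves the Claim and completes the proof.
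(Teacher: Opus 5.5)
Your proof is correct. It is Komiya's elementary proof of Sion's theorem, and each step checks out:
\begin{itemize}
\item Each sublevel set $L_z$ is nonempty, closed and convex.
\item Connectedness forces each $L_z$ into exactly one of the disjoint closed sets $L_{x_1}$ and $L_{x_2}$.
\item The strict gap $a<b$ is what gives closedness of $I$ and $J$.
\item The induction correctly shrinks $\mathbf{Y}$ to the sublevel set $\{y: f(x_n,y)\le a\}$, which is compact and convex and on which all hypotheses persist.
\end{itemize}

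The paper does not prove this statement. It quotes Sion's theorem from \cite{sion1958general} and uses it as a black box in the proof of Lemma~\ref{lm:minimax}, so your argument is a different, self-contained route rather than a reconstruction of anything in the paper. As best I recall, Sion's original proof rests on the Knaster--Kuratowski--Mazurkiewicz lemma, a combinatorial-topological result equivalent to Brouwer's fixed point theorem. Your proof needs only three ingredients: compactness (to pass to finitely many points), connectedness of segments and convex sets (the two-point case), and induction on the number of points. What yours buys is elementarity and self-containment. What the KKM route buys is a place in the fixed-point framework, which extends more readily to other intersection-type theorems.

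One small technical remark. To make ``$I$ and $J$ are closed'' airtight in an arbitrary topological vector space, whose segments need not be first countable or even Hausdorff, pull the partition back to $[0,1]$ via $t\mapsto z_t:=(1-t)x_1+tx_2$. Then check that $\{t: z_t\in I\}$ and $\{t: z_t\in J\}$ are closed in $[0,1]$. Sequences suffice there, and your argument applies verbatim because $t\mapsto z_t$ is continuous and $f(\cdot,y)$ is upper semicontinuous.
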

    The basic idea of the proof is that due to $\min_i\frac{a_i}{b_i}\le\frac{\sum_ia_i}{\sum_ib_i}\le \max_i\frac{a_i}{b_i}$,
    the function $\gamma(f,g;p,q)$ is quasi-convex and quasi-concave in each of its four inputs when the other three are fixed.
    However, $\gamma$ is not jointly quasi-convex in $(f,g)$, or quasi-concave in $(p,q)$.
    Thus, to prove the lemma, we will write our optimisation problem as
    \begin{align*}
        \min_{\substack{q\in\mathcal{P}_1 \\ p\in\mathcal{P}_0}} \sup_{\substack{f_+,f_-: \mathcal{X}\rightarrow[0,1]}} \gamma(f,g;p,q)={}& \min_{p\in\mathcal{P}_0}\min_{q\in\mathcal{P}_1} \sup_{f:\mathcal{X}\rightarrow[0,1]} \sup_{g:\mathcal{X}\rightarrow[0,1]} \gamma(f,g;p,q),
    \end{align*}
    and then repeatedly apply Sion's theorem to swap each $\min$ with each $\max$.

    Before we start applying Sion's theorem, we need to ensure its conditions are met.
    We set $\mathbf{Y}$ to either $\mathcal{P}_0$ or $\mathcal{P}_1$ as the case may be.
    Both of these are convex and compact.
    Note that $\gamma(f,g;p,q)$ is invariant to rescaling $f$ or $g$.
    Thus, without loss of generality, we can assume that $\sum_{x\in\mathcal{X}}f(x)= \sum_{x\in\mathcal{X}}g(x) =1$.
    We set $\mathbf{X}$ to be either $\Delta_0(\mathcal{X}):=\{g\in\Delta(\mathcal{X}):\min_{p\in\mathcal{P}_0}\mathbb{E}_p[g]>0\}$ or $\Delta_1(\mathcal{X}):=\{f\in\Delta(\mathcal{X}):\min_{q\in\mathcal{P}_1}\mathbb{E}_q[f]>0\}$ as the case may be.
    With this choice of domain, $\gamma(f,g;p,q)$ is upper semi-continuous in $f$ and $g$.
    (Recall that we have the convention $\frac{0}{0}=0$.
    The same issue does not arise for $p,q$, since there we only need lower semi-continuity.)
    It is easy to check that $\Delta_0(\mathcal{X})$ and $\Delta_1(\mathcal{X})$ are convex.
    Therefore we have
    \begin{align*}
        \gamma={}&  \sup_{f\in\Delta_1(\mathcal{X})} \sup_{g\in\Delta_0(\mathcal{X})} \min_{p\in\mathcal{P}_0} \min_{q\in\mathcal{P}_1} \gamma(f,g;p,q).
    \end{align*}
    We need one more fact for this proof which is that the function $\gamma(f,g;p,q)$ splits as follows:
    \begin{align*}
        \gamma(f,g;p,q)={}& \frac{\mathbb{E}_q[f]\mathbb{E}_p[g]}{\mathbb{E}_q[g]\mathbb{E}_p[f]},\\
        ={}& \frac{\mathbb{E}_q[f]}{\mathbb{E}_p[f]} \cdot \frac{\mathbb{E}_p[g]}{\mathbb{E}_q[g]},\\
        =:{}&\gamma_+(f;p,q)\gamma_-(g;p,q),
    \end{align*}
    and also as
    \begin{align*}
        \gamma(f,g;p,q)={}& \frac{\mathbb{E}_q[f]\mathbb{E}_p[g]}{\mathbb{E}_q[g]\mathbb{E}_p[f]},\\
        ={}& \frac{\mathbb{E}_q[f]}{\mathbb{E}_q[g]} \cdot \frac{\mathbb{E}_p[g]}{\mathbb{E}_p[f]},\\
        =:{}& \gamma_1(f,g;q)\gamma_0(f,g;p).
    \end{align*}
    Note that the functions $\gamma,\gamma_+,\gamma_-, \gamma_0,$ and $\gamma_1$ all have convex domain for $f$ and/or $g$, and convex, compact domain for $p$ and/or $q$.
    Furthermore, $\gamma_+$ and $\gamma_-$ are quasi-concave in $f$ and $g$ respectively; and $\gamma_0$ and $\gamma_1$ are quasi-convex in $p$ and $q$ respectively.

    \paragraph{Step 1} We first show that $\min_{q\in\mathcal{P}_1} \gamma(f,g;p,q)$ is quasi-convex in $p$ and quasi-concave in $g$:
    \begin{align*}
        \min_{q\in\mathcal{P}_1} \gamma(f,g;\lambda p+(1-\lambda)p',q)={}& \min_{q\in\mathcal{P}_1} \gamma_1(f,g;q) \gamma_0(f,g;\lambda p+(1-\lambda)p'), \\
        \le{}& \min_{q\in\mathcal{P}_1}\gamma_1(f,g;q) \max\{ \gamma_0(f,g;p)\,,\,\gamma_0(f,g;p')\},\\
        ={}& \max\left\{ \min_{q\in\mathcal{P}_1} \gamma_1(f,g;q) \gamma_0(f,g;p)\,,\, \min_{q\in\mathcal{P}_1}\gamma_1(f,g;q) \gamma_0(f,g;p')\right\},\\
        ={}&\max\left\{\min_{q\in\mathcal{P}_1} \gamma(f,g;p,q)\,,\, \min_{q\in\mathcal{P}_1} \gamma(f,g;p'q)\right\}.
    \end{align*}
    Hence, $\min_{q\in\mathcal{P}_1} \gamma(f,g;p,q)$ is quasi-convex in $p$.
    Now, since
    \begin{align*}
        \gamma(f,\lambda g + (1-\lambda)g';p,q)\ge{}& \min\{\gamma(f, g ;p,q) \,,\, \gamma(f,g';p,q)\},
    \end{align*}
    we have
    \begin{align*}
        \min_{q\in\mathcal{P}_1} \gamma(f,\lambda g + (1-\lambda)g';p,q)\ge{}& \min_{q\in\mathcal{P}_1} \min\{\gamma(f, g ;p,q) \,,\, \gamma(f,g';p,q)\},\\
        ={}& \min\left\{\min_{q\in\mathcal{P}_1} \gamma(f, g ;p,q) \,,\, \min_{q\in\mathcal{P}_1} \gamma(f,g';p,q)\right\}.
    \end{align*}
    Hence, $\min_{q\in\mathcal{P}_1} \gamma(f,g;p,q)$ is quasi-concave in $g$.
    Therefore,
    \begin{align}
         \sup_{f\in\Delta_1(\mathcal{X})} \sup_{g\in\Delta_0(\mathcal{X})} \min_{p\in\mathcal{P}_0} \min_{q\in\mathcal{P}_1} \gamma(f,g;p,q) =  \sup_{f\in\Delta_1(\mathcal{X})}  \min_{p\in\mathcal{P}_0} \sup_{g\in\Delta_0(\mathcal{X})} \min_{q\in\mathcal{P}_1} \gamma(f,g;p,q)\label{eq:step1}
    \end{align}
    \paragraph{Step 2}
    We now want to show that $\sup_{g\in\Delta_0(\mathcal{X})} \min_{q\in\mathcal{P}_1} \gamma(f,g;p,q)$ is quasi-convex in $p$ and quasi-concave in $f$.
    \begin{align*}
        \sup_{g\in\Delta_0(\mathcal{X})} \min_{q\in\mathcal{P}_1} \gamma(f,g;\lambda p+(1-\lambda)p',q) ={}& \sup_{g\in\Delta_0(\mathcal{X})} \left(\min_{q\in\mathcal{P}_1} \gamma_1(f,g;q) \right) \gamma_0(f,g;\lambda p+(1-\lambda)p'),\\
        \le{}&  \sup_{g\in\Delta_0(\mathcal{X})} \left(\min_{q\in\mathcal{P}_1} \gamma_1(f,g;q) \right) \max\{\gamma_0(f,g;p)\,,\,\gamma_0(f,g;p')\},\\
        ={}& \sup_{g\in\Delta_0(\mathcal{X})}  \max\left\{\min_{q\in\mathcal{P}_1} \gamma_1(f,g;q) \gamma_0(f,g;p)\,,\,\min_{q\in\mathcal{P}_1} \gamma_1(f,g;q)\gamma_0(f,g;p')\right\},\\
        ={}& \sup_{g\in\Delta_0(\mathcal{X})}  \max\left\{\min_{q\in\mathcal{P}_1} \gamma(f,g;p,q) \,,\,\min_{q\in\mathcal{P}_1} \gamma(f,g;p',q)\right\}, \\
        ={}& \max\left\{\sup_{g\in\Delta_0(\mathcal{X})}  \min_{q\in\mathcal{P}_1} \gamma(f,g;p,q) \,,\,\sup_{g\in\Delta_0(\mathcal{X})}  \min_{q\in\mathcal{P}_1} \gamma(f,g;p',q)\right\}.
    \end{align*}
    Hence, $\sup_{g\in\Delta_0(\mathcal{X})} \min_{q\in\mathcal{P}_1} \gamma(f,g;p,q)$ is quasi-convex in $p$.
    \begin{align*}
        \sup_{g\in\Delta_0(\mathcal{X})} \min_{q\in\mathcal{P}_1} \gamma(\lambda f+(1-\lambda) f',g;p,q)={}& \min_{q\in\mathcal{P}_1} \sup_{g\in\Delta_0(\mathcal{X})} \gamma(\lambda f+(1-\lambda) f',g;p,q),\\
        ={}& \min_{q\in\mathcal{P}_1}  \gamma_+(\lambda f+(1-\lambda) f';p,q) \sup_{g\in\Delta_0(\mathcal{X})} \gamma_-(g;p,q),\\
        \ge{}& \min_{q\in\mathcal{P}_1} \min\{ \gamma_+( f;p,q) \,,\, \gamma_+( f';p,q)\} \sup_{g\in\Delta_0(\mathcal{X})} \gamma_-(g;p,q),\\
        ={}&  \min_{q\in\mathcal{P}_1} \min\left\{ \sup_{g\in\Delta_0(\mathcal{X})} \gamma_-(g;p,q) \gamma_+( f;p,q) \,,\, \sup_{g\in\Delta_0(\mathcal{X})} \gamma_-(g;p,q) \gamma_+( f';p,q)\right\}, \\
        ={}&  \min_{q\in\mathcal{P}_1} \min\left\{ \sup_{g\in\Delta_0(\mathcal{X})} \gamma(f,g;p,q)  \,,\, \sup_{g\in\Delta_0(\mathcal{X})} \gamma(f',g;p,q)\right\},\\
        ={}&  \min\left\{ \min_{q\in\mathcal{P}_1} \sup_{g\in\Delta_0(\mathcal{X})} \gamma(f,g;p,q)  \,,\, \min_{q\in\mathcal{P}_1} \sup_{g\in\Delta_0(\mathcal{X})} \gamma(f',g;p,q)\right\},\\
        ={}&  \min\left\{  \sup_{g\in\Delta_0(\mathcal{X})} \min_{q\in\mathcal{P}_1} \gamma(f,g;p,q)  \,,\,  \sup_{g\in\Delta_0(\mathcal{X})} \min_{q\in\mathcal{P}_1} \gamma(f',g;p,q)\right\},
    \end{align*}
    where we can swap the min and the max in the first and last lines since $\gamma(f,g;p,q)$ satisfies the conditions for Sion's theorem.
    Hence $\sup_{g\in\Delta_0(\mathcal{X})} \min_{q\in\mathcal{P}_1}\gamma(f,g;p,q)$ is quasi-concave in $f$.
    Therefore,
    \begin{align}
         \sup_{f\in\Delta_1(\mathcal{X})}  \min_{p\in\mathcal{P}_0} \sup_{g\in\Delta_0(\mathcal{X})} \min_{q\in\mathcal{P}_1} \gamma(f,g;p,q) ={}& \min_{p\in\mathcal{P}_0} \sup_{f\in\Delta_1(\mathcal{X})} \sup_{g\in\Delta_0(\mathcal{X})} \min_{q\in\mathcal{P}_1} \gamma(f,g;p,q),\nonumber\\
         ={}& \min_{p\in\mathcal{P}_0} \sup_{f\in\Delta_1(\mathcal{X})}  \min_{q\in\mathcal{P}_1} \sup_{g\in\Delta_0(\mathcal{X})} \gamma(f,g;p,q). \label{eq:step2}
    \end{align}

    \paragraph{Step 3}
    For the final step we show that $\sup_{g\in\Delta_0(\mathcal{X})}\gamma(f,g;p,q)$ is quasi-convex in $q$ and quasi-concave in $f$:
    \begin{align*}
        \sup_{g\in\Delta_0(\mathcal{X})} \gamma(\lambda f+(1-\lambda)f',g;p,q)={}&   \gamma_1(\lambda f+(1-\lambda)f';p,q)\sup_{g\in\Delta_0(\mathcal{X})} \gamma_0(g;p,q),\\
        \ge{}& \min\{\gamma_1( f;p,q)\,,\, \gamma_1(f';p,q)\} \sup_{g\in\Delta_0(\mathcal{X})} \gamma_0(g;p,q),\\
        ={}& \min\left\{\gamma_1( f;p,q) \sup_{g\in\Delta_0(\mathcal{X})} \gamma_0(g;p,q)\,,\, \gamma_1(f';p,q) \sup_{g\in\Delta_0(\mathcal{X})} \gamma_0(g;p,q) \right\},\\
        ={}& \min\left\{ \sup_{g\in\Delta_0(\mathcal{X})} \gamma(f,g;p,q) \,,\, \sup_{g\in\Delta_0(\mathcal{X})}\gamma(f',g;p,q) \right\}.
    \end{align*}
    Hence, $\sup_{g\in\Delta_0(\mathcal{X})}\gamma(f,g;p,q)$ is quasi-concave in $f$.
    Since
    \begin{align*}
        \gamma(f,g;p,\lambda q+(1-\lambda)q')\le \max\{\gamma(f,g;p,q)\,,\,\gamma(f,g;p,q')\},
    \end{align*}
    we have
    \begin{align*}
        \sup_{g\in\Delta_0(\mathcal{X})} \gamma(f,g;p,\lambda q+(1-\lambda)q')\le{}& \sup_{g\in\Delta_0(\mathcal{X})} \max\{\gamma(f,g;p,q)\,,\,\gamma(f,g;p,q')\},\\
        ={}& \max\left\{\sup_{g\in\Delta_0(\mathcal{X})} \gamma(f,g;p,q)\,,\,\sup_{g\in\Delta_0(\mathcal{X})} \gamma(f,g;p,q')\right\}.
    \end{align*}
    Hence, $\sup_{g\in\Delta_0(\mathcal{X})}\gamma(f,g;p,q)$ is quasi-convex in $q$.
    Therefore,
    \begin{align}
        \min_{p\in\mathcal{P}_0} \sup_{f\in\Delta_1(\mathcal{X})}  \min_{q\in\mathcal{P}_1} \sup_{g\in\Delta_0(\mathcal{X})} \gamma(f,g;p,q) = \min_{p\in\mathcal{P}_0}   \min_{q\in\mathcal{P}_1} \sup_{f\in\Delta_1(\mathcal{X})} \sup_{g\in\Delta_0(\mathcal{X})} \gamma(f,g;p,q). \label{eq:step3} 
    \end{align}
    Combining \eqref{eq:step1}, \eqref{eq:step2}, and \eqref{eq:step3} gives the result.
\end{proof}

\end{NoHyper}
\end{document}